\def\eqref#1{equation~\ref{#1}}
\def\1{\bm{1}}
\def\va{{\bm{a}}}
\def\vm{{\bm{m}}}
\def\vn{{\bm{n}}}
\def\vp{{\bm{p}}}
\def\vu{{\bm{u}}}
\def\vv{{\bm{v}}}
\DeclareMathAlphabet{\mathsfit}{\encodingdefault}{\sfdefault}{m}{sl}
\SetMathAlphabet{\mathsfit}{bold}{\encodingdefault}{\sfdefault}{bx}{n}
\def\sR{{\mathbb{R}}}
\newcommand{\expt}{\mathop{\mathbb{E}}}
\newcommand{\bA}{\mathbf{A}}
\newcommand{\bR}{\mathbb{R}}
\newcommand{\cA}{\mathcal{A}}
\newcommand{\notp}{{\neg p}}
\theoremstyle{plain}
\newtheorem{theorem}{Theorem}[section]
\newtheorem{proposition}[theorem]{Proposition}
\theoremstyle{definition}
\newtheorem{definition}[theorem]{Definition}
\theoremstyle{remark}
\definecolor{cyan}{rgb}{0.0, 1.0, 1.0}
\definecolor{darkred}{rgb}{0.55, 0.0, 0.0}
\newcommand{\tikzmark}[1]{\tikz[overlay,remember picture] \node (#1) {};}
\newcommand{\drawpermutationtikz}[4]{%
    \ifthenelse{\equal{#1}{north}}{%
        \edef\Aangle{90}%
        \edef\Ashortstart{8pt}%
        \edef\Ashortend{5pt}%
    }{
        \edef\Aangle{270}%
        \edef\Ashortstart{4pt}%
        \edef\Ashortend{1pt}%
    }
    \ifthenelse{\equal{#2}{north}}{%
        \edef\Bangle{90}%
        \edef\Bshortstart{8pt}%
        \edef\Bshortend{5pt}%
    }{
        \edef\Bangle{270}%
        \edef\Bshortstart{4pt}%
        \edef\Bshortend{1pt}%
    }
    \ifthenelse{\equal{#3}{north}}{%
        \edef\Cangle{90}%
        \edef\Cshortstart{8pt}%
        \edef\Cshortend{5pt}%
    }{
        \edef\Cangle{270}%
        \edef\Cshortstart{4pt}%
        \edef\Cshortend{1pt}%
    }
    \ifthenelse{\equal{#4}{north}}{%
        \edef\Dangle{90}%
        \edef\Dshortstart{8pt}%
        \edef\Dshortend{5pt}%
    }{
        \edef\Dangle{270}%
        \edef\Dshortstart{4pt}%
        \edef\Dshortend{1pt}%
    }
    \begin{tikzpicture}[overlay,remember picture]
        \draw[->,shorten >=\Ashortend,shorten <=\Ashortstart,out=\Aangle,in=\Aangle,distance=0.5cm] (A.#1) to (sA.#1);
        \draw[->,shorten >=\Bshortend,shorten <=\Bshortstart,out=\Bangle,in=\Bangle,distance=0.5cm] (B.#2) to (sB.#2);
        \draw[->,shorten >=\Cshortend,shorten <=\Cshortstart,out=\Cangle,in=\Cangle,distance=0.5cm] (C.#3) to (sC.#3);
        \draw[->,shorten >=\Dshortend,shorten <=\Dshortstart,out=\Dangle,in=\Dangle,distance=0.5cm] (D.#4) to (sD.#4);
    \end{tikzpicture}
}
\newcommand{\drawpermutation}[8]{%
    {a\tikzmark{sA}\tikzmark{#1}_1^1} ~~~~ {a\tikzmark{sB}\tikzmark{#2}_1^2} ~~~~ {a\tikzmark{sC}\tikzmark{#3}_2^1} ~~~~ {a\tikzmark{sD}\tikzmark{#4}_2^2}
    \drawpermutationtikz{#5}{#6}{#7}{#8}
}
\title{NfgTransformer: Equivariant Representation Learning for Normal-form Games}
\author{Siqi Liu\textsuperscript{$\dagger$,$\diamondsuit$}, Luke Marris\textsuperscript{$\dagger$,$\diamondsuit$}, Georgios Piliouras\textsuperscript{$\dagger$}, Ian Gemp\textsuperscript{$\dagger$}, Nicolas Heess\textsuperscript{$\dagger$} \\
\textsuperscript{$\dagger$}Google DeepMind, \textsuperscript{$\diamondsuit$}University College London \\
\texttt{\{liusiqi,marris,gpil,imgemp,heess\}@google.com}
}
\begin{document}

\maketitle

\begin{abstract}
Normal-form games (NFGs) are the fundamental model of {\em strategic interaction}. We study their representation using neural networks. We describe the inherent equivariance of NFGs --- any permutation of strategies describes an equivalent game --- as well as the challenges this poses for representation learning. We then propose the NfgTransformer\footnote{The model is open-sourced at \url{https://github.com/google-deepmind/nfg_transformer}.} architecture that leverages this equivariance, leading to state-of-the-art performance in a range of game-theoretic tasks including equilibrium-solving, deviation gain estimation and ranking, with a common approach to NFG representation. We show that the resulting model is interpretable and versatile, paving the way towards deep learning systems capable of game-theoretic reasoning when interacting with humans and with each other.
\end{abstract}

\section{Introduction}

Representing data in a learned embedding space, or representation learning, is one of the timeless ideas of deep learning. Foundational representation learning architectures \citep{lecun1995convolutional, hochreiter1997long, vaswani2017attention} have provided performance and generality, bringing together researchers who used to work on expert systems targeting narrow domains. Consider Convolutional Neural Networks (CNNs): by exploiting the translation invariance inherent to images, CNNs replaced feature descriptors underpinning tasks as diverse as image classification, segmentation and in-painting, leading to paradigm shifts across the field.

One area of research that has resisted this trend is game theory. Here, we see two striking similarities to classical computer vision research. First, active research topics such as equilibrium solving \citep{vlatakis2020no}, ranking \citep{balduzzi2018re}, social choice theory \citep{Anshelevich2021DistortionIS} and population learning \citep{lanctot2017unified} all focus on specialised solutions, despite sharing the common language of normal-form games (NFGs). Second, task-specific solutions suffer from fundamental limitations. The popular Elo ranking algorithm \citep{elo2008rating}, for instance, assigns a scalar rating to each player derived from an NFG between players. Although Elo ratings are designed to be predictive of match outcomes, they are poor predictors beyond transitive games --- the Elo score is simply too restrictive a representation to reflect cyclic game dynamics. Improvements to Elo followed \citep{bertrand2023limitations}, but all relied on engineered feature descriptors, instead of learning. In equilibrium-solving, computing exact Nash equilibria is intractable beyond two-player zero-sum games \citep{daskalakis2009complexity} yet approximate solvers are non-differentiable, take non-deterministic amount of time to converge, struggle to parallelise, and can fail. These fundamental limitations have indirect consequences too. An entire line of works in tabular multiagent reinforcement learning (RL) \citep{littman2001friend, hu2003nash, greenwald2003correlated} relied on equilibrium solving as part of their learning rules --- an NFG is constructed from agents' Q-tables in each state, whose equilibrium informs subsequent policy updates. Unfortunately, reviving these ideas in the context of {\em deep} RL has been challenging, if not impossible, as it requires equilibrium solving as a subroutine in between every gradient update. 

Indeed, we are not the first to recognise these limitations. Several recent works incorporated representation learning {\em implicitly}, in {\em narrow domains} \citep{marris2022turbocharging, duan2023nash, vadori2023ordinal}. 
We address these limitations {\em explicitly} and {\em in generality}. Our goal is to develop principled, general-purpose representation of NFGs that can be used in a wide range of game-theoretic applications. We ask 1) which property, if any, could we leverage for efficient representation learning of NFGs without loss of generality and 2) can we expect performance in a range of game-theoretic tasks using a common approach to representing NFGs. We answer both questions affirmatively and propose NfgTransformer, a general-purpose representation learning architecture with state-of-the-art performance in tasks as diverse as equilibrium solving, deviation gain estimation and ranking.

In its most basic form, strategic interactions between players are formulated as NFGs where players simultaneously select actions and receive payoffs subject to the joint action. Strategic interactions are therefore presented as payoff tensors, with values to each player tabulated under every joint action. This tabular view of strategic interactions presents its own challenges to representation learning. Unlike modalities such as images and text whose spatial structure can be exploited for efficient representation learning \citep{lecun1995convolutional, hochreiter1997long}, the position of an action in the payoff tensor is unimportant: permuting the payoff matrix of any NFG yields an equivalent game --- an equivalency known as strongly isomorphic games \citep{McKinsey1951, gabarro2011complexity}. This inherent equivariance to NFGs has inspired prior works to compose order-invariant pooling functions in the neural network architecture for efficiency, albeit at the expense of the generality of the representation \citep{feng2021neural, marris2022turbocharging}.

We aim to leverage this inherent equivariance of NFGs while preserving full generality of the learned representation. This implies several desiderata that we discuss in turn. First, the representation needs to be versatile, allowing for inquiries at the level of individual actions, joint-actions, per-player or for the entire game. Second, it needs to be equivariant: for any per-action inquiry, the outputs for two actions should be exchanged if their positions are exchanged in the payoff tensor. Third, the embedding function should not assume that outcomes of all joint-actions are observed --- the representation should accommodate incomplete NFGs in a principled way. Fourth, the function should apply to games of different sizes. This implies that the number of network parameters should be independent from the size of the games \citep{hartford2016deep}. Finally, it would be desirable if the network architecture is interpretable, allowing for inspection at different stages of the network.

In the rest of this paper, we show how our proposed encoder architecture, NfgTransformer, satisfies all these desiderata simultaneously. The key idea behind our approach is to consider the embedding function class that represents an NFG as action embeddings, reversing the generative process from actions to payoffs. Action embeddings can be suitably composed to answer questions at different granularities and allows for equivariance in a straightforward way --- permutations of actions or players in the payoff tensor shall be reflected as a permutation in the action embeddings. We argue that NfgTransformer is a competitive candidate for general-purpose equivariant representation learning for NFGs, bridging the gap between deep learning and game-theoretic reasoning.

\section{Background}
\label{sec:background}

\paragraph{Normal-form Games}
NFGs are the fundamental game formalism where each player $p$ simultaneously plays one of its $T$ actions $a_p \in \{ a_p^1, \dots, a_p^T \} = \cA_p$ and receives a payoff $G_p: \cA \rightarrow \sR$
as a function of the joint action $a = (a_1, \dots, a_N) \in \cA$ of all $N$ players. Let $a = (a_p, a_\notp)$ with $a_\notp = (\dots, a_{p-1}, a_{p+1}, \dots) \in \cA_\notp$ the actions of all players except $p$. 
Let $\sigma(a) = \sigma(a_p, a_{\notp})$ denote the probability of players playing the joint action $a$ and $\sigma$ a probability distribution over the space of joint actions $\cA$. A pure strategy is an action distribution that is deterministic when a mixed-strategy can be stochastic. The value to player $p$ under $\sigma$ is given as $\expt_{a \sim \sigma}[G_p(a_p, a_\notp)]$. We refer to the payoff tensor tabulated according to the action and player ordering above as $G$.  %

\paragraph{Nash Equilibrium (NE)} Under a mixed joint strategy that factorises into an outer product of marginals $\sigma = \bigotimes_p \sigma_p$, player $p$'s unilateral deviation incentive is defined as
\begin{equation} \label{eq:delta_p}
    \delta_p(\sigma) = \max_{a'_p \in \cA_p} \expt_{a \sim \sigma}[G_p(a'_p, a_\notp) - G_p(a)].
\end{equation}
A factorisable mixed joint strategy $\sigma = \bigotimes_p \sigma_p$ is an $\epsilon$-NE if and only if $\delta(\sigma) = \max_p \delta_p(\sigma) \le \epsilon$. We refer to this quantity as the {\sc NE Gap} as it intuitively measures the distance from $\sigma$ to an NE of the game. A mixed-strategy NE is guaranteed to exist for a finite game \citep{nash1951} but exactly computing a normal-form NE beyond two-player zero-sum is PPAD-complete \citep{chen2009settling, daskalakis2009complexity}. If $\sigma$ is deterministic with $\sigma(a) = 1$, then $\delta(\sigma)$ or equivalently $\delta(a)$ defines the maximum deviation gain of the joint pure-strategy $a$. $a$ is a pure-strategy NE when $\delta(a) = 0$.

\paragraph{Permutation Equivariance}

Consider a strong isomorphism $\phi: G \rightarrow G'$ of NFGs \citep{gabarro2011complexity} with $\phi = ((\tau_p, p \in [N]), \omega)$, $\tau_p: a_p^i \rightarrow a_p^{i'}$ a player action permutation and $\omega: p \rightarrow p'$ a player permutation. Elements of the transformed game $G' = \phi(G)$ are therefore given as $$G'_{\omega(p)}\left(\tau_{\omega(1)}(a_{\omega(1)}), \dots, \tau_{\omega(N)}(a_{\omega(N)})\right) = G_p(a_1,...,a_N).$$

An encoder $f: G \rightarrow (\bA_1, \dots, \bA_N)$, with $\bA_p = (\va^1_p, \dots, \va^T_p)$ the action embeddings for player $p$, is said to be {\em equivariant} if
\begin{align}
\label{eq:perm_equivariance}
 f(\phi(G)) = (\bA'_1, \dots, \bA'_N) \text{ with }
 \bA'_{\omega(p)} = (\tau_{p}(\va^1_{p}), \dots, \tau_{p}(\va^T_{p}))
\end{align}

Here we slightly abuse the notation of $\tau_p$ to operate over action embeddings. Intuitively, permutation equivariance implies that $\phi$ and $f$ commute, or $f(\phi(G)) = \phi(f(G))$. We adopt the convention that the player permutation $\omega$ is applied after player action permutations $\tau_p, \forall p$.

\paragraph{Multi-Head Attention}

We describe self- and cross-attention QKV mechanisms that have become ubiquitous thanks to their generality and potential to scaling \citep{vaswani2017attention, dosovitskiy2020image, jaegle2021perceiver}. Both operations extend the basic QKV attention mechanism as follows:

\begin{equation}
\label{eq:attention}
\text{Attention}(Q, K, V) = \text{softmax}\left(\frac{QK^T}{\sqrt{d_k}}\right) V
\end{equation}

with $Q \in \bR^{n_q \times d_k}$, $K \in \bR^{n_k \times d_k}$ and $V \in \bR^{n_k \times d_v}$ the $n_q$ queries and $n_k$ key-value pairs. Conceptually, the attention mechanism outputs a weighted sum of $n_k$ values for each of $n_q$ query vectors, whose weights are determined by pairwise dot product between the key and query vectors. The output is of shape $\bR^{n_q \times d_v}$. The inputs QKV are outputs from fully-connected networks themselves, with $Q$ a function of $x_q \in \bR^{n_q \times d_{x_q}}$ and $K, V$ projected from the same input $x_{kv} \in \bR^{n_k \times d_{x_{kv}}}$. 
The attention operation is: 1) {\em order-invariant} with respect to $x_{kv}$; and 2) {\em equivariant} with respect to $x_q$. These are the key properties we leverage in the design of the NfgTransformer to achieve its permutation equivariance property. We refer to an attention layer as self-attention when $x_q$ is the same as $x_{kv}$, and cross-attention if not. In practice, each attention layer may have $H$ attention heads performing the attention operation of Equation~\ref{eq:attention} in parallel. This enables the attention layer to aggregate multiple streams of information in one forward pass. %

\section{Equivariant Game Representation}
\label{sec:equivariance}

While we have informally motivated the need for equivariant embedding functions, we formally state two practical implications of an equivariant embedding function that follow from a general theorem on the conditions under which two actions must have identical embeddings given an equivariant embedding function. For conciseness, we defer all formal statements and proofs to Appendix~\ref{app:equivariance}.

\begin{proposition}[Repeated Actions]
\label{prop:repeated}
    If $G(a^i_p, a_\notp) = G(a^j_p, a_\notp), \forall a_\notp$ and $f$ is deterministic and equivariant with $f(G) = (\dots, (\dots, \va^i_p, \dots, \va^j_p, \dots), \dots)$ then it follows that $\va^i_p = \va^j_p$.
\end{proposition}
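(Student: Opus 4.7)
The plan is to exhibit a particular strong isomorphism $\phi$ under which the game is invariant, and then invoke equivariance together with determinism to deduce equality of the two embeddings.

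First I would define $\phi = ((\tau_q, q \in [N]), \omega)$ where $\omega$ is the identity player permutation, $\tau_q$ is the identity action permutation for every $q \neq p$, and $\tau_p$ is the transposition that swaps $a_p^i$ with $a_p^j$ and fixes every other action of player $p$. By the definition given in the \emph{Permutation Equivariance} paragraph, the entries of $G' = \phi(G)$ are obtained by relabeling action indices according to the $\tau_q$'s (with players unpermuted). Concretely, any joint action whose $p$-component is neither $a_p^i$ nor $a_p^j$ is left untouched, and any joint action of the form $(a_p^i, a_\notp)$ is swapped with the joint action $(a_p^j, a_\notp)$.

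Second, I would use the hypothesis $G(a_p^i, a_\notp) = G(a_p^j, a_\notp)$ for all $a_\notp$ (interpreting $G$ as the vector of all players' payoffs) to conclude that $\phi(G) = G$ as payoff tensors: the only entries affected by $\phi$ are those where player $p$ plays $a_p^i$ or $a_p^j$, and on those entries the hypothesis makes the swap a no-op for every player's payoff.

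Third, I would apply the equivariance relation $f(\phi(G)) = \phi(f(G))$ from Equation~\ref{eq:perm_equivariance}. Since $\phi(G) = G$ and $f$ is deterministic, the left-hand side equals $f(G)$, giving
\begin{equation*}
f(G) = \phi(f(G)).
\end{equation*}
By construction, the action of $\phi$ on an embedding tuple $(\bA_1, \dots, \bA_N)$ leaves $\bA_q$ unchanged for $q \neq p$ and inside $\bA_p$ transposes the $i$-th and $j$-th embeddings. Invariance under this transposition forces $\va_p^i = \va_p^j$, which is the claim.

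There is no real technical obstacle here; the only subtlety is getting the bookkeeping right, namely verifying that $\phi$ as defined is a legitimate strong isomorphism (straightforward, since $\tau_p$ is a bijection on $\cA_p$) and checking that the hypothesis really does imply $\phi(G) = G$ for all players' payoffs simultaneously, not just for player $p$'s. Once these two sanity checks are in place, the argument is a one-line application of equivariance and determinism.
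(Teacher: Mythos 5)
Your proposal is correct and follows essentially the same route as the paper: exhibit the strong automorphism that transposes $a_p^i$ and $a_p^j$ (identity on players and on all other actions), note the hypothesis makes $\phi(G)=G$, and then use determinism plus $f(\phi(G))=\phi(f(G))$ to force $\va_p^i=\va_p^j$. The paper packages this through its notion of strategically equivalent actions and the preceding theorem, but the substance is identical; your explicit check that $\phi(G)=G$ simply fills in a detail the paper leaves implicit.
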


\begin{proposition}[Player Symmetry]
\label{prop:symmetry}
    If player $p$ and $q$ are symmetric, $f$ is deterministic and equivariant with $f(G) = (\dots, \bA_p, \dots, \bA_q, \dots)$, then $\bA_p$ and $\bA_q$ are identical up to permutation.
\end{proposition}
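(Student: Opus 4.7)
The strategy is to exhibit an automorphism of $G$ that swaps players $p$ and $q$, then push it through the encoder using determinism plus equivariance, and read off the desired relation between $\bA_p$ and $\bA_q$ on the output side.

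\textbf{Step 1 (Formalising symmetry).} First I would pin down the notion of ``symmetric players'' in the language already introduced for strong isomorphisms $\phi = ((\tau_r)_{r \in [N]}, \omega)$. Players $p$ and $q$ are symmetric iff there exists such a $\phi$ with $\omega$ the transposition that swaps $p$ and $q$ (and fixes every other index), together with action bijections $\tau_p : \cA_p \to \cA_q$ and $\tau_q : \cA_q \to \cA_p$ (and $\tau_r = \mathrm{id}$ for $r \notin \{p,q\}$), such that $\phi(G) = G$. This is just the statement that $G$ admits an automorphism interchanging the roles of $p$ and $q$, which is exactly the content of ``symmetric'' in the strong-isomorphism formalism of the excerpt.

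\textbf{Step 2 (Pushing through $f$).} Since $f$ is deterministic and $\phi(G) = G$, applying $f$ to both sides gives $f(G) = f(\phi(G))$. By the equivariance property of Equation~\ref{eq:perm_equivariance}, the right-hand side equals the image of $f(G)$ under $\phi$: the player slot indexed by $\omega(r)$ holds the embeddings $(\tau_r(\va_r^1), \dots, \tau_r(\va_r^T))$. Because $\omega$ only swaps $p$ and $q$, the only nontrivial slot comparisons are at positions $p$ and $q$.

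\textbf{Step 3 (Reading off the conclusion).} Comparing the $q$-th player slot on both sides of $f(G) = f(\phi(G))$ yields
\begin{equation*}
\bA_q = \bigl(\tau_p(\va_p^1), \dots, \tau_p(\va_p^T)\bigr),
\end{equation*}
i.e.\ $\bA_q$ is obtained from $\bA_p$ by the action permutation $\tau_p$. (Comparing the $p$-th slot gives the symmetric statement $\bA_p = \tau_q(\bA_q)$, consistent with $\tau_q = \tau_p^{-1}$.) Hence $\bA_p$ and $\bA_q$ coincide up to permutation of their row indices, which is the claim.

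\textbf{Main obstacle.} The only non-routine piece is Step 1: fixing a precise, coordinate-free definition of ``symmetric players'' that matches the strong-isomorphism machinery of Section~\ref{sec:background}. Once that definition is adopted, Steps 2 and 3 reduce to a direct invocation of determinism and Equation~\ref{eq:perm_equivariance}, with no further computation. (For completeness I would also note, in an aside paralleling Proposition~\ref{prop:repeated}, that if $|\cA_p| = |\cA_q| = 1$ and the payoffs are identical after relabelling, then $\bA_p = \bA_q$ outright, since the only permutation of a one-element set is the identity.)
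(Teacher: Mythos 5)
Your proposal is correct and follows essentially the same route as the paper: exhibit a strong automorphism whose player permutation maps $p$ to $q$, use determinism and equivariance to get $f(G) = f(\phi(G)) = \phi(f(G))$, and read off $\tau_p(\bA_p) = \bA_q$ from the permuted player slot. The paper merely packages the middle step as a small standalone theorem about strategically equivalent actions having equal embeddings, which you have inlined; your remark that $\tau_p$ can be viewed as a bijection $\cA_p \to \cA_q$ is just a notational variant of the paper's convention of an index permutation composed with $\omega$.
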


Proposition~\ref{prop:repeated} guarantees by construction that repeated actions are treated identically in any downstream applications. Proposition~\ref{prop:symmetry} guarantees that player symmetry are reflected in the action embedding space which we show empirically in Section~\ref{sec:interpretation} for the NfgTransformer. 

\section{NfgTransformer}
\label{sec:method}

\begin{figure}
    \centering
    \includegraphics[width=\textwidth]{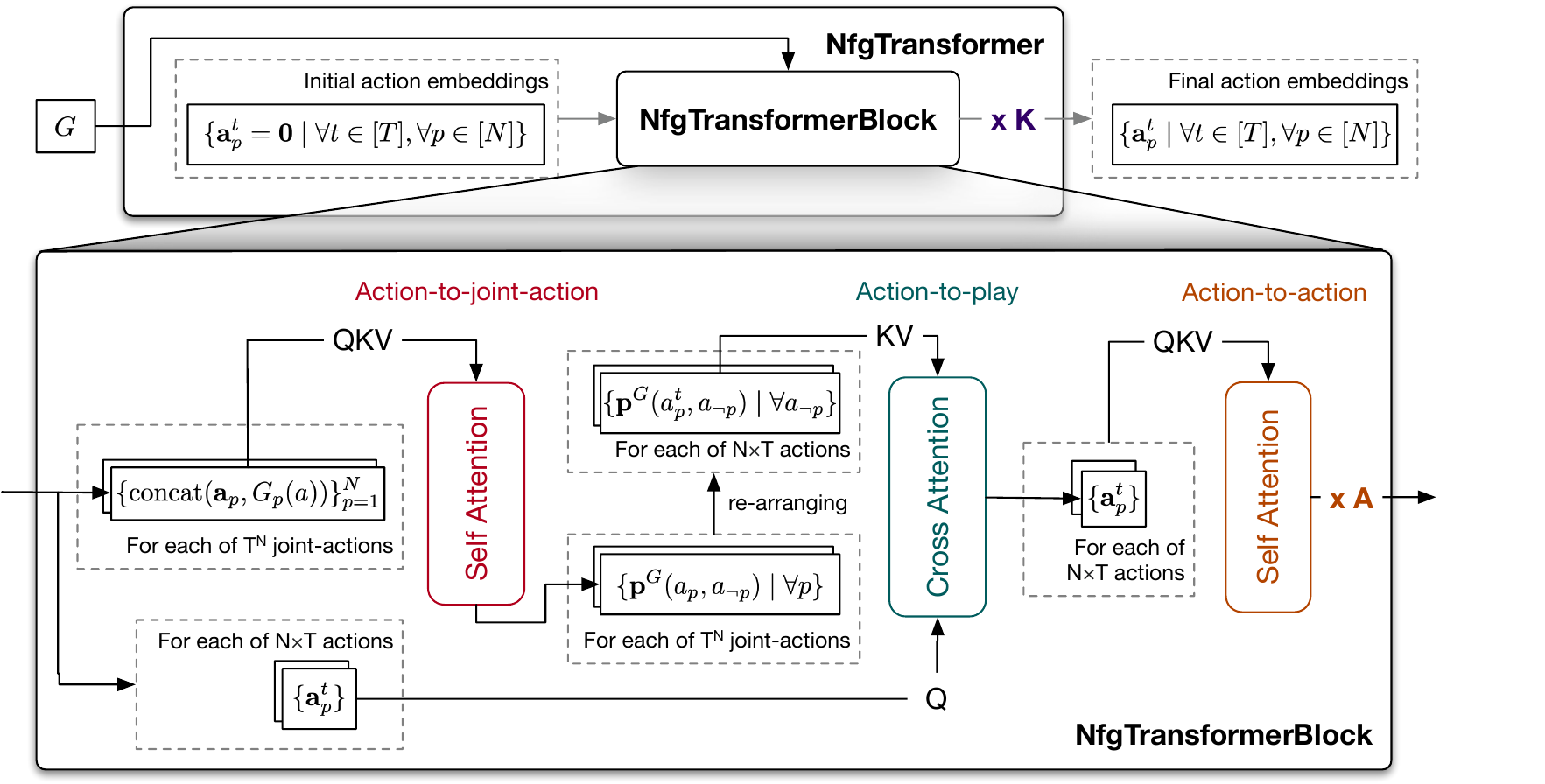}
    \caption{An overview of the NfgTransformer. The payoff tensor $G$ is encoded as action embeddings $\{\va^t_p \mid \forall t \in [T], \forall p \in [N]\}$ (Top). Action embeddings are zero-initialised and iteratively updated through a sequence of $K$ NfgTransformer blocks (Bottom). An arrow labeled with ``(Q)KV'' originates from a set of input (query-)key-values and terminates at a set of outputs. Each dashed box denotes an unordered set of elements of a specific type and cardinality.}
    \label{fig:nfg_transformer}
\end{figure}

We now describe the NfgTransformer, an encoder network that factorises a payoff tensor $G$ into action embeddings via a sequence of $K$ NfgTransformer blocks (Figure~\ref{fig:nfg_transformer} (Top)), each composed of a sequence of self- and cross-attention operations (Figure~\ref{fig:nfg_transformer} (Bottom)). We then show concrete examples of decoders and loss functions in Figure~\ref{fig:nfg_transformer_decoder} for several game-theoretic tasks at different decoding granularities, showing the generality of our approach. Finally, we discuss how the NfgTransformer naturally incorporates incomplete payoff tensors leveraging the flexibility of attention operations.

\subsection{Initialisation \& Iterative refinement}

Permutation equivariance implies that action embeddings must be agnostic to how players and actions are ordered in the payoff tensor. This suggests that action embeddings across players and actions must be initialised identically. We zero-initialise all action embeddings $\bA = \{\va^t_p = \mathbf{0} \mid \forall t \in [T], \forall p \in [N] \}$ with $\va^t_p \in \bR^D$. Upon initialisation, action embeddings are iteratively refined via a sequence of $K$ blocks given current action embeddings and the payoff tensor $G$. Each block returns updated action embeddings via self- and cross-attention operations that we describe in turn.

\paragraph{action-to-joint-action self-attention} represents a {\em play} of each action $a_p$ under a joint-action $a = (a_p, a_\notp)$ given payoff values to all players. Recall the definition of a self-attention operation, $x_q = x_{kv} = \{\text{concat}(\va_p, G_p(a))\}^N_{p=1}$, yielding one vector output per action for every joint-action. We refer to the output $\vp^G(a_p, a_\notp)$ as a {\em play} of $a_p$ under the joint action $a$ and payoffs $G$;

\paragraph{action-to-play cross-attention} then encodes information from all {\em play}s of each action $a^t_p$, with key-values $x_{kv} = \{ \vp^G(a^t_p, a_\notp) \mid \forall a_\notp \}$ and $x_q = \{ \va^t_p \}$, a singleton query. This operation yields a singleton output, $\{ \va^t_p \}$, as a function of all its {\em play}s and its input action embedding vector;

\paragraph{action-to-action self-attention} then represents each action given all action embeddings. Here, $x_q = x_{kv} = \{ \va^t_p \mid \forall p \in [N], \forall t \in [T] \}$. We ablate this operation (by varying $A$) in Section~\ref{sec:l2ball}, showing its benefits in propagating information across action embeddings.

Within each block, equivariance is preserved given key-value order-invariance, and query equivariance properties of the attention operation. Each output embedding $\va^t_p$ is a function of its own embedding at input, its unordered set of {\em play}s, and the unordered set of all action embeddings.

\subsection{Task-specific decoding}

The resulting action embeddings can be used for a variety of downstream tasks at different decoding granularities. We describe and empirically demonstrate three use-cases in specifics (Figure~\ref{fig:nfg_transformer_decoder}).
\paragraph{Nash equilibrium-solving} requires decoding at the level of each action, estimating a marginal action distribution for each player $\hat\sigma_p = \text{softmax}(w^1_p, \dots, w^T_p)$ where $w^t_p = \text{MLP}(\va^t_p)$ is the logit for an action $a^t_p$. Here, we follow \citet{duan2023nash} in minimising the loss function $\max_p \delta_p(\hat\sigma)$ end-to-end via gradient descent with $\hat\sigma = \otimes_p \hat\delta_p$ and $\delta_p(\hat\sigma) = \max_{a'_p \in \cA_p} \expt_{a \sim \hat{\sigma}}[G_p(a'_p, a_\notp) - G_p(a)]$.
\paragraph{Max deviation-gain estimation} decodes a scalar estimate for each joint-action $a = (a_1, \dots, a_N), \forall a \in \cA$. Here, we represent each joint action as $a = \sum_p \va_p$ and estimate its maximum deviation gain $\hat\delta(a)$ by minimising $L_2(\hat\delta(a), \max_p \delta_p(a)) = (\hat\delta(a) - \max_p \delta_p(a))^2$, with $\delta_p(a) = \max_{a'_p \in \cA_p} [G_p(a'_p, a_\notp) - G_p(a)]$ the deviation gain to player $p$ under the joint action $a$.
\paragraph{Payoff reconstruction} decodes a scalar for each payoff value $G_p(a), \forall p \in [N], \forall a \in \cA$. Here we use a self-attention operation for decoding, similar to the action-to-joint-action self-attention operation in the encoder but {\em without} appending to action embeddings their payoff values, which are to be reconstructed. To compute a reconstruction loss, we minimise $L_2(\hat{G}_p(a), G_p(a))$, with $\hat{G}_p(a)$ a function of the action embedding $\va_p$ and the unordered set of co-player action embeddings $\va_\notp$.

\begin{figure}
    \centering
    \includegraphics[width=\textwidth]{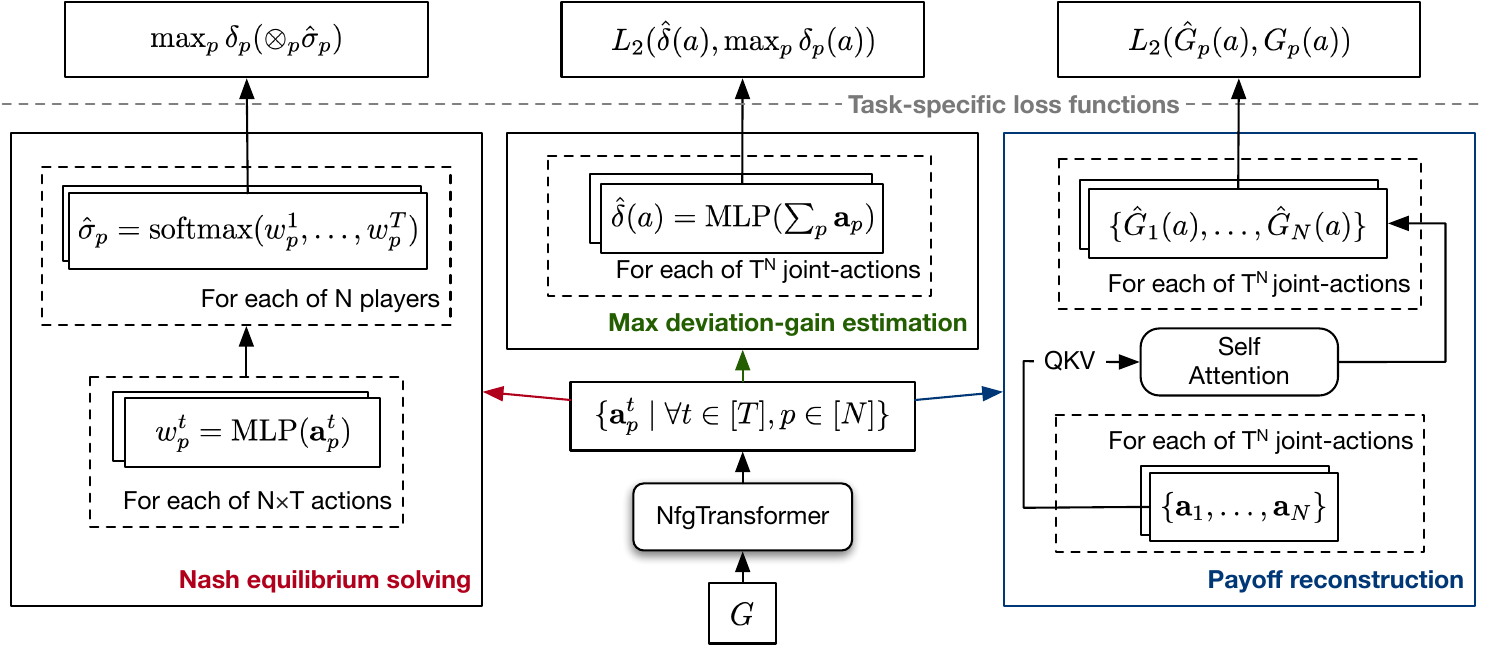}
    \caption{Example task-specific decoders and losses from general-purpose action embeddings.}
    \label{fig:nfg_transformer_decoder}
\end{figure}

\subsection{Representing incomplete games}
\label{sec:masking}

We have assumed thus far that outcomes for every joint-action of the game are observed. This is not always the case in practice as the costs of evaluating every joint-action can be prohibitive. Instead, one may infer outcomes for all joint-actions, given an incomplete NFG \citep{elo2008rating}. With a slight modification, the NfgTransformer accommodates such use-cases in a principled way. To do so, we extend the vanilla attention implementation (Equation~\ref{eq:attention}) to allow for additional binary mask vectors $\vm_q \in \{0, 1\}^{n_q}$ and $\vm_{kv} \in \{0, 1\}^{n_{k}}$ for the query and key-value inputs, indicating their validity. Equation~\ref{eq:attention_mask} defines this masked attention operation, with $\mathds{1}_\infty(1) = 0$ and $\mathds{1}_\infty(0) = \infty$.

\begin{equation}
\label{eq:attention_mask}
\text{MaskedAttention}(Q, K, V, \vm_q, \vm_{kv}) = \text{softmax}\left(\frac{QK^T - \mathds{1}_\infty(\vm_q \otimes \vm_{kv})}{\sqrt{d_k}}\right) V
\end{equation}

To represent incomplete NFGs, we use the Equation~\ref{eq:attention_mask} in lieu of Equation~\ref{eq:attention} in all self- and cross-attention operations and set masking vectors accordingly. For instance, if a joint-action is unobserved, then one would set $\vm_{kv}$ to reflect the validity of each {\em play} for the action-to-play operation. The NfgTransformer architecture is highly effective for representing actions in incomplete games and predict payoffs for unobserved joint-actions, as we show empirically in Section~\ref{sec:payoff_prediction}.

\section{Results}
\label{sec:results}

The goal of our empirical studies is three-fold. First, we compare the NfgTransformer to baseline architectures to demonstrate improved performance in diverse downstream tasks on synthetic and standard games. Second, we vary the model hyper-parameters and observe how they affect performance. We show in particular that some tasks require larger action embedding sizes while others benefit from more rounds of iterative refinement. Lastly, we study {\em how} the model learned to solve certain tasks by interpreting the sequence of learned attention masks in a controlled setting. Our results reveal that the solution found by the model reflects elements of intuitive solutions to NE-solving in games. For completeness, we discuss additional empirical results in Appendix~\ref{app:two_by_two}. For instance, we show all nontrivial equilibrium-invariant 2×2 games are embedded following structure of the known embedding space proposed in \citet{marris2023equilibrium}.

\subsection{Synthetic games}

\label{sec:l2ball}

\begin{table}[ht]
\caption{We compare NfgTransformer to baseline architectures in synthetic games. Each configuration is averaged across 5 independent runs. For NfgTransformer variants (Ours), we annotate each variant with corresponding hyper-parameters ($K$, $A$ and $D$ as shown in Figure~\ref{fig:nfg_transformer})\label{tab:synthetic_games}. We provide training curves with confidence intervals and parameter counts of each configuration in Appendix~\ref{app:l2_ball_games}.}
\centering
\tabcolsep=5pt\relax
\begin{small}
\begin{tabular}{c|cccc||cccc}
\hline
\multirow{3}{*}{Model} & \multicolumn{4}{c||}{NE (NE Gap)} & \multicolumn{4}{c}{Max-Deviation-Gain (MSE)}  \\
                         \cline{2-9}
                       & N=2 & N=2 & N=3 & N=3 & N=2 & N=2 & N=3 & N=3 \\
                       & T=16 & T=64 & T=8 & T=16 & T=16 & T=64 & T=8 & T=16 \\ \hline \hline
{\tt Ours(D=~32,K=2,A=1)}  & 0.2239 & 0.1685 & 0.1344 & 0.0796 & 0.0949 & 0.5008 & 0.5206 & 0.6649 \\
{\tt Ours(D=~32,K=4,A=1)}  & 0.0466 & 0.1096 & 0.0892 & 0.0553 & 0.0248 & 0.3755 & 0.3679 & 0.6173 \\
{\tt Ours(D=~32,K=8,A=1)}  & 0.0344 & 0.0554 & 0.0484 & 0.0334 & 0.0067 & 0.2989 & 0.3582 & 0.5825 \\
{\tt Ours(D=~64,K=8,A=0)}  & 0.0332 & 0.0661 & 0.0636 & 0.0384 & 0.0056 & 0.1549 & 0.2848 & 0.5583 \\
{\tt Ours(D=~64,K=8,A=1)}  & 0.0308 & 0.0545 & 0.0478 & 0.0325 & 0.0007 & 0.0784 & 0.1830 & 0.4961 \\
{\tt Ours(D=~64,K=8,A=2)}  & {\bf 0.0243} & 0.0542 & 0.0437 & 0.0314 & 0.0005 & 0.0759 & 0.1942 & 0.4725 \\
{\tt Ours(D=128,K=2,A=1)}  & 0.2090 & 0.1665 & 0.1274 & 0.0769 & 0.0159 & 0.1922 & 0.3154 & 0.5357 \\
{\tt Ours(D=128,K=4,A=1)}  & 0.0429 & 0.0981 & 0.0804 & 0.0530 & 0.0013 & 0.1361 & 0.0955 & 0.4153 \\
{\tt Ours(D=128,K=8,A=1)}  & 0.0308 & {\bf 0.0502} & {\bf 0.0412} & {\bf 0.0297} & {\bf 0.0001} & {\bf 0.0161} & {\bf 0.0487} & {\bf 0.3641} \\ \hline \hline
{\tt EquivariantMLP}  & 0.2770 & 0.2132 & {\bf 0.1431} & {\bf 0.0929} & 0.1789 & 0.8153 & 0.5433 & 0.7914 \\
{\tt MLP}  & 0.3905 & 0.3248 & 0.1741 & 0.1381 & 0.3854 & 0.8354 & 0.5623 & 0.7906 \\
{\tt NES}  & {\bf 0.0829} & {\bf 0.1635} & 0.1478 & 0.1140 & {\bf 0.0488} & {\bf 0.4860} & {\bf 0.4047} & {\bf 0.6480} \\ \hline
\end{tabular}

\end{small}
\end{table}

We first evaluate variations of the NfgTransformer architecture on synthetic games of varying sizes on NE equilibrium-solving and deviation gain estimation. To generate synthetic games with broad coverage, we follow \citet{marris2022turbocharging} which samples games from the equilibrium-invariant subspace, covering all strategic interactions that can affect the equilibrium solution of an NFG. Each game's payoff tensor $G$ has zero-mean over other player strategies and Frobenius norm $\left\| G_p \right\|_F = \sqrt{T^N}$. We compare our results to baseline MLP networks with numbers of parameters {\em at least} that of our largest transformer variant (at 4.95M parameters), an equivariant MLP network that re-arranges actions in descending order of their average payoffs, as well as an NES \citep{marris2022turbocharging} network that is designed for equilibrium-solving. See Appendix~\ref{app:l2_ball_games} for details on game sampling, network architectures and parameter counts of each model. We note that the parameter count of the NfgTransformer is independent of the game size, a desideratum of \citet{hartford2016deep}.

\subsubsection{Solving for NE equilibrium} 
\label{sec:ne_solving}
For equilibrium solving, we optimise variants of the NfgTransformer to minimise the {\sc NE Gap} $\delta(\hat\sigma) = \max_p \delta_p(\hat\sigma)$ (Figure~\ref{fig:nfg_transformer_decoder} (Left)). Table~\ref{tab:synthetic_games} (Left) shows our results. EquivariantMLP outperforms MLP \citep{duan2023nash}, demonstrating the importance of leveraging equivariance inherent to NFGs but remains ineffective at solving this task. NES \citep{marris2022turbocharging}, equivariant by construction, significantly outperforms both MLP variants in 2-player settings but trails behind in 3-player games. The NfgTransformer is also equivariant by construction but learns to capture relevant information without handcrafted payoff feature vectors. All NfgTransformer variants, most at fewer parameter count than baselines, significantly outperform across game sizes with near-zero {\sc NE Gap}.

Among the NfgTransformer variants, our results show a clear trend: increasing the number of transformer blocks (with $K \in [2, \dots, 8]$) improves performance, especially as the game becomes large. This makes intuitive sense, as it adds to the number of times the action embeddings can be usefully refined --- action embeddings at the end of one iteration become more relevant queries for the next. In contrast, the benefit of increased action embedding size is muted (with $D \in [32, \dots, 128]$). We hypothesise that for equilibrium-solving, information related to a subset of the available actions can often be ignored through iterative refinement (e.g. dominated actions), as they do not contribute to the final equilibrium solution. Lastly, we evaluate an NfgTransformer variant that does {\em not} perform any action-to-action self-attention ($A$ = 0). In this case, action embeddings for the {\em same} player do not interact within the same block and its performance is markedly worse. Of particular interests is the comparison between the variants with $A = 1$ and $A = 2$ where $A = 2$ demonstrates a benefits in the most complex games of size $16 \times 16 \times 16$ but not in smaller games. This suggests that action-to-action self-attention facilitates learning, especially in tasks that require iterative reasoning.

\subsubsection{Estimating Maximum Deviation Gains}

A related task is to determine what is the maximum incentive for any player to deviate from a joint pure-strategy $\sigma$ (or equivalently, a joint-action). This quantity is informative on the stability of a joint behaviour --- in particular, if a joint pure-strategy has a maximum deviation gain $\delta(a)$ of zero, then by definition we have found a pure-strategy NE. We optimise a NfgTransformer network to regress towards the maximum deviation-gain $\delta(a)$ for every joint pure-strategy $a$, using a per joint-action decoder architecture (Figure~\ref{fig:nfg_transformer_decoder} (Middle)). We report the regression loss in mean squared error of different architecture variants in Table~\ref{tab:synthetic_games} (Right). We observe that NES consistently outperforms MLP baselines, but underperforms the NfgTransformer variants as the size of the game increases.

Similar to our observations in Section~\ref{sec:ne_solving}, the number of transformer blocks played a role in transformer variants' final performance. However, it is no longer the main factor. Instead, the action-embedding size $D$ becomes critical. Variants with higher embedding size $D = 128$ can be competitive, even for {\em shallow} models (e.g. $K = 4$ in $16 \times 16 \times 16$ games). This can be explained by the lack of structure in the underlying game, as payoff tensors cover the full equilibrium-invariant subspace of NFGs: payoffs of one joint-action does not provide any information on the outcomes of another. To perform well, the model must learn to memorise outcomes of different joint-actions a reduced action embedding size can become a bottleneck.

\subsection{Payoff Prediction in Empirical \textbf{{\sc Disc}} games}

\label{sec:payoff_prediction} 

\begin{figure*}
    \centering
    \includegraphics[width=\textwidth]{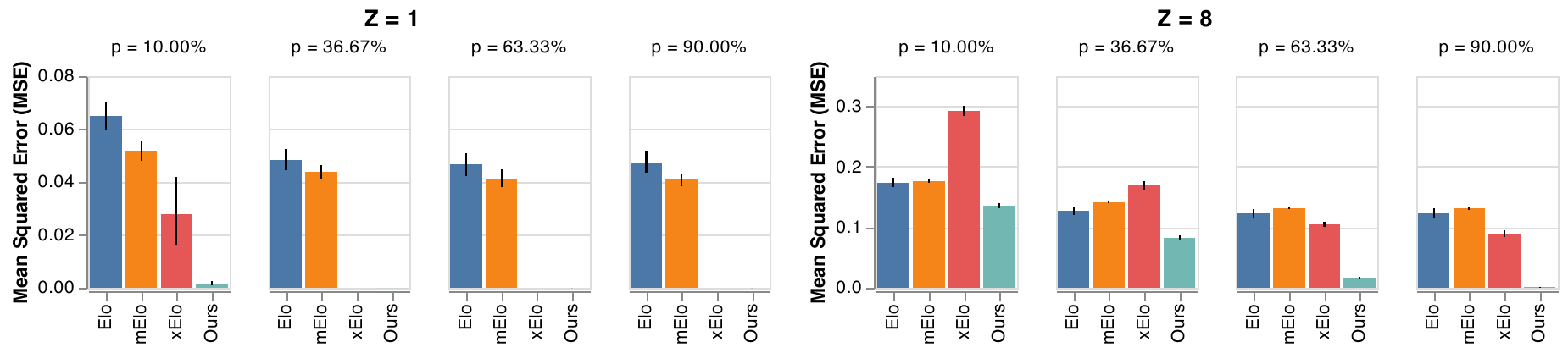}
    \caption{Payoff prediction error averaged over all players across {\em unobserved} joint-actions. Results are averaged over 32 randomly sampled empirical \textsc{DISC} games in each game configuration.}
    \label{fig:disc_games}
\end{figure*}

What if the game class follows a {\em structured} generative process? This is often the case in practice and a useful representation learning method should capture any such latent structure. We turn to {\sc Disc} games \citep{balduzzi2019open} to evaluate the efficacy of NfgTransformer in this case, compared to several payoff prediction methods from the ranking literature. 

\begin{definition}[{\sc Disc} Game]
\label{def:disc}
Let $\vu_t, \vv_t \in \sR^{Z}, t \in [T]$, the win-probability to action $i$ when playing against $j$ is defined as $P_{ij} = \sigma( \vu_i^T \vv_j - \vu_j^T \vv_i) = 1 - P_{ji}$ with $\sigma(x) = \frac{1}{1 + e^{-x}}$ the sigmoid function.
\end{definition}

Definition~\ref{def:disc} describes a class of symmetric two-player zero-sum games where the outcomes are defined by latent vectors $\vu_t, \vv_t \in \sR^{Z}, t \in [T]$, generalising the original definition of {\sc Disc} game \citep{balduzzi2019open} to allow for latent vectors with $Z > 1$. Payoff values under one joint-action therefore become informative for predicting the outcomes of others. The amount of information that can be inferred about one joint-action from knowing another is controlled by the latent vector dimension $Z$. The problem setting is as follows. Each algorithm is given access to an empirical game where outcomes for each joint-action is observed with probability $p$. The goal is to accurately predict the outcome to each player under {\em unobserved} joint-actions.

A rich body of literature have been dedicated to solving this task owing to its relevance in real-world competitions. In Go and Chess, players are classically assigned Elo ratings which are designed to be predictive of the win-probability between any two ranked players. Several improvements to Elo have been proposed since, recognising the many limitations of Elo. We compare our results to methods such as Elo \citep{elo2008rating}, mElo \citep{balduzzi2018re} and xElo \citep{bertrand2023limitations} across different settings of the {\sc Disc} game. Figure~\ref{fig:disc_games} shows our results in MSE averaged across {\em unobserved} joint-actions. 
NfgTransformer outperforms all baselines significantly across all settings. In particular, NfgTransformer recovered the latent variable game (i.e. $Z = 1$) near perfectly as soon as 10\% of the joint-actions are observed, with an error rate an order of magnitude lower than the second best method. This result is particularly remarkable as baseline methods are designed with {\sc Disc} games in mind, when NfgTransformer is not. At $Z = 8$, NfgTransformer continues to outperform, with its prediction accuracy degrading gracefully as fewer joint actions are observed. Our results suggest that NfgTransformer is highly effectively at recognising and exploiting the latent structure in games if it exists. We provide details on game sampling, masking, network architecture and baseline implementation in Appendix~\ref{app:disc_ranking}. For NfgTransformer, outcomes of the unobserved joint-actions are masked out following the procedure described in Section~\ref{sec:masking}. At training time, the model minimises reconstruction loss for {\em all} joint-actions.

\subsection{Interpreting NfgTransformer}
\label{sec:interpretation}

\begin{figure}
    \centering
    \includegraphics[width=\textwidth]{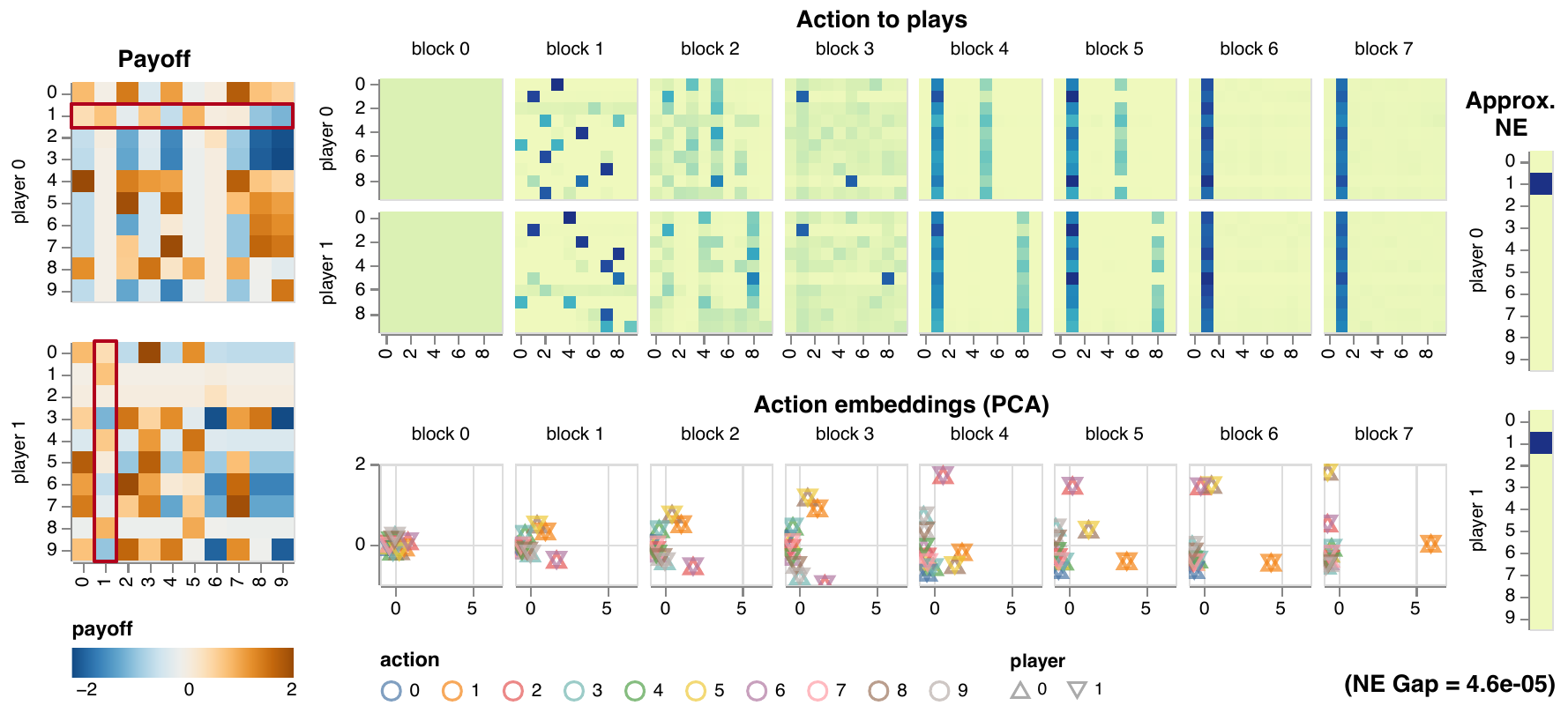}
    \caption{Visualisation of attention masks and action-embeddings at inference time on a held-out instance of Bertrand Oligopoly game whose payoff tensor is shown on the Left and the inferred NE strategy profile shown on the right (with {\sc NE Gap} at near zero). The equilibrium pure-strategies for the two players are shown in red. The sequence of 8 {\em action-to-play} attention masks and the PCA-reduced action-embeddings at the end of each transformer block are shown in the middle.}
    \label{fig:gamut}
\end{figure}

NfgTransformer explicitly reasons about action-embedding through structured attention mechanism, allowing for inspection at each stage of the iterative refinement process. We exploit this property and investigate {\em how} the model implements game-theoretic algorithms such as equilibrium-solving in a suite of 22 GAMUT games representing diverse strategic interactions \citep{nudelman2004run, porter2008simple} that are classically studied in the economics and game theory literature. We optimised an NfgTransformer network as in Section~\ref{sec:l2ball}, but focused on $10 \times 10$ games and removed the action-to-action self-attention (i.e. $A = 0$) for a more concise visualisation. Instead of attending to multiple pieces of information in parallel, each attention layer is implemented by a single attention head ($H=1$). We show in Appendix~\ref{app:gamut} that this simplified architecture is sufficient in approximating NE in games of this size to reasonable accuracy. We visualise the network on held-out games.

Figure~\ref{fig:gamut} illustrates an instance of iterative NE solving in a game of Bertrand Oligopoly in action. The NfgTransformer successfully identified a pure-strategy NE (Right) with near zero {\sc NE Gap} (Top-Right). Inspecting the payoff tensor (Left), we verify that player 0 playing action 2 and player 1 countering with action 1 is indeed a pure-strategy NE. Is this solution reflected in the underlying attention mechanism? Indeed, the action-to-plays attention masks (Top) appear to have converged to a recurring pattern where each action attends to the equilibrium-strategy of its co-player. This is remarkable for two reasons. First, NE by definition implies indifference across action choices --- when co-players implement the equilibrium strategy, the focal player should be indifferent to its action choice. Here, this indifference played out. Second, an equilibrium solution should be stable which appears to be the case over the last few iterations of action-to-plays attentions. Zooming in on the attention masks at earlier iterations, we see that following zero-initialisation of all action embeddings, the attention mask for each action is equally spread across all its {\em plays}. The attention masks at the next block, by contrast, appear structured. Indeed, each action attends to the {\em play} that involves the best-response from the co-player. It is worth noting that this pattern of attending to one's best-response in NE-solving emerged through learning, without prior knowledge. 

The action embeddings themselves also reveal interesting properties of the game, such as symmetry (Proposition~\ref{prop:symmetry}). While it might not be immediately obvious that the payoff matrices in Figure~\ref{fig:gamut} can be made symmetric by permuting actions, the action embeddings (Bottom) show that action-embeddings across the two players overlap exactly, thanks to the inherent equivariance of the NfgTransformer. The ordering of players and actions is unimportant --- the representation of an action is entirely driven by its outcomes under the joint-actions. Here, action embeddings revealed the inherent symmetry of Bertrand Oligopoly \citep{bertrand1883theorie}.

For completeness, we offer similar analysis of the NfgTransformer applied to other game classes in Appendix~\ref{app:gamut}, including games without symmetry, with mixed-strategy NE as well as instances where NfgTransformer failed. Additionally, we show examples of applying NfgTransformer to out-of-distribution games such as Colonel Blotto \citep{roberson2006colonel} that presents several strategy cycles.

\section{Related Works}

Recent works leveraged deep learning techniques to accelerate scientific discoveries in mathematics \citep{fawzi2022discovering}, physics \citep{PhysRevResearch.2.033429} and biology \citep{jumper2021highly}. Our work is similarly motivated, but brings deep learning techniques to game theory and economic studies. We follow a line of works in bringing scalable solutions to game theory \citep{marris2022turbocharging, duan2023nash, vadori2023ordinal}, or integrating components for strategic reasoning as a part of a machine learning system \citep{hu2003nash, greenwald2003correlated, feng2021neural, liu2022neupl, liu2022simplex}.
In game theory, \citet{hartford2016deep} is the closest to our work and the first to apply deep representation learning techniques to cognitive modelling of human strategic reasoning using NFGs. The authors systematically outlined a number of desiderata for representation learning of two-player NFGs, including player and action equivariance as well as the independence between the number of parameters of the learned model and the size of the game. The NfgTransformer satisfies these desiderata but applies to $n$-player general-sum games. \citet{wiedenbeck2023data} studies efficient data structures for payoff representation; our approach can be readily integrated into deep learning systems without any assumption on the games.

\section{Conclusion}

We proposed NfgTransformer as a general-purpose, equivariant architecture that represents NFGs as action embeddings. We demonstrate its versatility and effectiveness in a number of benchmark tasks from different sub-fields of game theory literature, including equilibrium solving, deviation gain estimation and ranking. We report empirical results that significantly improve upon state-of-the-art baseline methods, using a unified representation learning approach. We show that the resulting model is also interpretable and parameter-efficient. Our work paves the way for integrating game-theoretic reasoning into deep learning systems as they are deployed in the real-world.

\subsubsection*{Acknowledgments}
We are grateful to Bernardino Romera-Paredes for the productive discussion on the different considerations in designing an equivariant neural architecture, to Wojciech M. Czarnecki for his expertise in ranking and evaluation in games and to Skanda Koppula for his advice on optimisation techniques for the transformer architecture.

\bibliography{iclr2024_conference}
\bibliographystyle{iclr2024_conference}

\newpage
\appendix

\section{Permutation Equivariant Representation of NFGs}
\label{app:equivariance}

We provide formal statements on identities in action embedding representation when using a deterministic, permutation equivariant encoder for NFGs. First, we recall the definition of a strong isomorphism between two NFGs \citep{McKinsey1951, gabarro2011complexity}.

\begin{definition}[Strongly Isomorphic Games]
    \label{def:strong_isomorphism}
    Let $G$ and $G'$ be two NFGs. $G$ and $G'$ are said to be strongly isomorphic and $\phi$ a strong isomorphism if $\phi = ((\tau_p, p \in [N]), \omega)$ with $\tau_p: a_p^i \rightarrow a_p^{i'}$ a player action permutation and $\omega: p \rightarrow p'$ a player permutation such that $G'_{\omega(p)}\left(\tau_{\omega(1)}(a_{\omega(1)}), \dots, \tau_{\omega(N)}(a_{\omega(N)})\right) = G_p(a_1,...,a_N), \forall a \in \cA$.
\end{definition}

To make Definition~\ref{def:strong_isomorphism} concrete, consider the coordination and anti-coordination games shown in Figure~\ref{fig:isomorphic_games}. The two games are strongly isomorphic because there exists a strong isomorphism $\phi = ((\tau_1 = (a^1_1 \to a^2_1, a^2_1 \to a^1_1), \tau_2 = (a^1_2 \to a^1_2, a^2_2 \to a^2_2)), \omega = (1 \to 1, 2 \to 2))$. As aside, \citet{McKinsey1951} calls strongly isomorphic games strategically equivalent which we discuss soon.

\begin{figure}[ht]
\centering
\begin{subfigure}[t]{0.49\linewidth}
    \centering
    \payoffstable[%
        label=$G$,
        row player first strategy label=$a_1^1$,%
        row player second strategy label=$a_1^2$,%
        column player first strategy label=$a_2^1$,%
        column player second strategy label=$a_2^2$,%
    ]{1}{0}{0}{1}{1}{0}{0}{1}
    \caption{Coordination game\label{fig:coord_game}}
\end{subfigure}
\begin{subfigure}[t]{0.49\linewidth}
    \centering
    \payoffstable[%
        label=$G'$,
        row player first strategy label=$a_1^1$,%
        row player second strategy label=$a_1^2$,%
        column player first strategy label=$a_2^1$,%
        column player second strategy label=$a_2^2$,%
    ]{0}{1}{1}{0}{0}{1}{1}{0}
    \caption{Anti-coordination game\label{fig:anticoord_game}}
\end{subfigure}
\caption{Strongly isomorphic games.\label{fig:isomorphic_games}}
\end{figure}

We additionally note that in the special case when $G$ is $G'$, $\phi$ is referred to as a strong \emph{auto}morphism.

\begin{definition}[Strongly Automorphic Game]
    \label{def:strong_automorhpism}
    $G$ is said to be strongly automorphic and $\phi$ a strong automorphism if $\phi = ((\tau_p, p \in [N]), \omega)$ with $\tau_p: a_p^i \rightarrow a_p^{i'}$ a player action permutation and $\omega: p \rightarrow p'$ a player permutation such that $G_{\omega(p)}\left(\tau_{\omega(1)}(a_{\omega(1)}), \dots, \tau_{\omega(N)}(a_{\omega(N)})\right) = G_p(a_1,...,a_N), \forall a \in \cA$.
\end{definition}

For instance, the coordination game (Figure~\ref{fig:coord_game}) is also strongly automorphic as there exists three automorphisms that recover the same game.
\begin{subequations}
\begin{align}
    \phi &= ((\tau_1 = (a^1_1 \to a^2_1, a^2_1 \to a^1_1), \tau_2 = (a^1_2 \to a^2_2, a^2_2 \to a^1_2)), \omega = (1 \to 1, 2 \to 2))  \label{eq:coord_automorhpism_a} \\
    \phi &= ((\tau_1 = (a^1_1 \to a^2_1, a^2_1 \to a^1_1), \tau_2 = (a^1_2 \to a^2_2, a^2_2 \to a^1_2)), \omega = (1 \to 2, 2 \to 1)) \\
    \phi &= ((\tau_1 = (a^1_1 \to a^1_1, a^2_1 \to a^2_1), \tau_2 = (a^1_2 \to a^1_2, a^2_2 \to a^2_2)), \omega = (1 \to 2, 2 \to 1))  \label{eq:coord_automorhpism_c}
\end{align}
\end{subequations}

Note that $\phi$ is a permutation over all the players' actions which is a composition of the player and action permutations $\phi = \tau_1 \cdot ... \cdot \tau_N \cdot \omega$. Therefore $\phi$ is not a general permutation, but a structured one. We use a convention that the player permutation is applied last. Finally, we recall any permutation $\pi$ can be written uniquely as $m$ permutation orbits with $\pi = C^1,\dots,C^m$, each operating on a disjoint (possibly singleton) subset of elements that $\pi$ operates over. Therefore $\phi$ is also a collection of permutation orbits.

Considering the coordination game again, the automorphisms (Equations~\ref{eq:coord_automorhpism_a}-\ref{eq:coord_automorhpism_c}) can be written as permutations which each consists of two orbits containing two actions each.

\begin{subequations}
\noindent\begin{minipage}{0.3\linewidth}
    \begin{align} \label{eq:perma}
        \drawpermutation{B}{A}{D}{C}{north}{south}{north}{south}
    \end{align}
\end{minipage} \hfill
\noindent\begin{minipage}{0.3\linewidth}
    \begin{align} \label{eq:permb}
        \drawpermutation{D}{C}{B}{A}{north}{north}{south}{south}
    \end{align}
\end{minipage} \hfill
\noindent\begin{minipage}{0.3\linewidth}
    \begin{align} \label{eq:permc}
        \drawpermutation{C}{D}{A}{B}{north}{north}{south}{south}
    \end{align}
\end{minipage}
\end{subequations}

~

\begin{definition}[Strategically Equivalent Actions]
    \label{def:strageic_equivalence}
    Two actions $a_p^i$ and $a_q^j$ are strategically equivalent if there exists a strong automorphism, $\phi$ which contains $a_p^i$ and $a_q^j$ in an orbit. Equivalently, two actions $a_p^i$ and $a_q^j$ are strategically equivalent if there exists a strong automorphism, $\phi = ((..., \tau_p=(..., i\to j, ...)), \omega=(..., p\to q))$.
\end{definition}

Again, consider the running example of the coordination game. From Equation~\ref{eq:perma} we can see that $(a_1^1, a_1^2)$, and $(a_2^1, a_2^2)$ are each strategically equivalent pairs. Furthermore, from Equation~\ref{eq:permb} we can see that $(a_1^1, a_2^2)$, and $(a_1^2, a_2^1)$ are also each strategically equivalent pairs. Therefore in the coordination games all the actions are strategically equivalent to each other.

\begin{theorem}
    If an embedding function, $f$, is deterministic and equivariant, then strategically equivalent actions, $\va^i_p$ and $\va^j_q$, must have the same embeddings. 
\end{theorem}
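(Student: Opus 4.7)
The plan is to leverage the combination of determinism and equivariance with the fact that a strong automorphism $\phi$ fixes $G$. By the alternative characterization in Definition~\ref{def:strageic_equivalence}, strategic equivalence of $a_p^i$ and $a_q^j$ gives us a concrete $\phi = ((\ldots,\tau_p = (\ldots, i \to j,\ldots)),\omega = (\ldots, p \to q))$ with $\phi(G) = G$. So the route is: apply $f$ to both sides of $\phi(G) = G$, use determinism on the right and equivariance on the left, then read off the required identity from the component-wise description of Equation~\ref{eq:perm_equivariance}.

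The first step is to invoke determinism to conclude $f(\phi(G)) = f(G)$. The second step is to invoke equivariance (Equation~\ref{eq:perm_equivariance}) which gives $f(\phi(G)) = (\bA'_1,\dots,\bA'_N)$ with $\bA'_{\omega(r)} = (\tau_r(\va_r^1),\dots,\tau_r(\va_r^T))$ for every player $r$. Combining these, $(\bA_1,\dots,\bA_N)$ equals $(\bA'_1,\dots,\bA'_N)$ as ordered tuples, so in particular $\bA_{\omega(r)} = (\tau_r(\va_r^1),\dots,\tau_r(\va_r^T))$ for every $r$. Reading this entry-by-entry, the embedding sitting at position $\tau_r(k)$ in $\bA_{\omega(r)}$ is $\va_r^k$, i.e. $\va_{\omega(r)}^{\tau_r(k)} = \va_r^k$. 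The third step specializes this to $r = p$ and $k = i$: since $\omega(p) = q$ and $\tau_p(i) = j$ by construction of $\phi$, we get exactly $\va_q^j = \va_p^i$.

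A small subtlety worth addressing in the write-up is the case where the orbit of $\phi$ containing $a_p^i$ has length greater than two, so that $a_q^j$ is reached only after several applications of $\phi$ rather than in one step. The clean way to handle this is to note that if $\phi$ is a strong automorphism, so is $\phi^k$ for every $k \ge 1$ (the composition of automorphisms is an automorphism, since both sides of Definition~\ref{def:strong_automorhpism} are preserved under composition), and pick the smallest $k$ with $\phi^k(a_p^i) = a_q^j$. Then $\phi^k$ has the form required by the ``equivalently'' half of Definition~\ref{def:strageic_equivalence}, and the argument above applies verbatim to $\phi^k$.

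The main obstacle is purely notational: untangling the convention that $\phi = \tau_1 \cdot \ldots \cdot \tau_N \cdot \omega$ acts on action indices while $\tau_p$ in Equation~\ref{eq:perm_equivariance} is overloaded to act on embedding vectors, and making sure the indices on both sides of $\bA_{\omega(p)} = (\tau_p(\va_p^1),\dots,\tau_p(\va_p^T))$ are interpreted consistently. There is no real technical content beyond bookkeeping: determinism turns an automorphism into a fixed-point equation for $f(G)$, and equivariance converts that fixed-point equation into equalities between embedding slots related by $\phi$, from which the theorem follows immediately.
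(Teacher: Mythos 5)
Your proof is correct and follows essentially the same route as the paper's: determinism gives $f(G)=f(\phi(G))$, equivariance gives $f(\phi(G))=\phi(f(G))$, and reading the resulting fixed-point identity at the slot $(\omega(p),\tau_p(i))=(q,j)$ yields $\va^i_p=\va^j_q$. Your extra remark about orbits of length greater than two (passing to a power $\phi^k$, which is again a strong automorphism) is a valid and slightly more careful treatment of the ``same orbit'' phrasing of Definition~\ref{def:strageic_equivalence} that the paper's one-line proof glosses over.
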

\begin{proof}
    The embedding function, $f$, is deterministic and equivariant over players and actions. Additionally, if $\phi$ is an automorphism of $G$, then $f(G) = f(\phi(G)) = \phi(f(G))$. Therefore the embeddings are also equal, $\va^i_p = \va^j_q$. 
\end{proof}

\begin{proposition}[Repeated Actions]
    If $G(a^i_p, a_\notp) = G(a^j_p, a_\notp), \forall a_\notp$ and $f$ is deterministic and equivariant with $f(G) = (\dots, (\dots, \va^i_p, \dots, \va^j_p, \dots), \dots)$ then it follows that $\va^i_p = \va^j_p$.
\end{proposition}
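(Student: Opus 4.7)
The plan is to exhibit an explicit strong automorphism of $G$ that swaps $a^i_p$ with $a^j_p$, and then invoke the equivariance of $f$ to conclude that the corresponding embeddings coincide. This uses exactly the same idea that underlies the theorem about strategically equivalent actions established just before the proposition; the proposition is essentially the special case where the relevant automorphism is the transposition within a single player's action set.

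First I would define $\phi = ((\tau_1, \dots, \tau_N), \omega)$ by taking $\omega$ to be the identity on players, $\tau_q$ to be the identity on $\cA_q$ for every $q \neq p$, and $\tau_p$ to be the transposition swapping $a^i_p$ and $a^j_p$ (identity on the remaining actions of player $p$). I would then verify that $\phi$ is a strong automorphism of $G$ in the sense of Definition~\ref{def:strong_automorhpism}. Concretely, for any joint action $a = (a_p, a_\notp) \in \cA$ and any player $q$, one checks $G_q(\tau_1(a_1), \dots, \tau_N(a_N)) = G_q(a_1, \dots, a_N)$. If $a_p \notin \{a^i_p, a^j_p\}$ this is immediate since $\tau_p$ acts trivially; if $a_p \in \{a^i_p, a^j_p\}$ it follows directly from the hypothesis $G(a^i_p, a_\notp) = G(a^j_p, a_\notp)$ for all $a_\notp$, interpreting $G$ componentwise across the $N$ players.

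With $\phi(G) = G$ established, I would apply the equivariance property $f(\phi(G)) = \phi(f(G))$ from Equation~\ref{eq:perm_equivariance}. The left-hand side equals $f(G)$ by the automorphism property. The right-hand side is obtained by applying $\tau_p$ to the action embeddings of player $p$ (and leaving the other players' embeddings untouched), so the tuple of embeddings for player $p$ must be invariant under swapping the $i$-th and $j$-th entries. Reading off the $i$-th (equivalently $j$-th) coordinate gives $\va^i_p = \va^j_p$, which is the desired conclusion.

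I do not expect any genuine obstacle here: the only thing to be careful about is that the hypothesis $G(a^i_p, a_\notp) = G(a^j_p, a_\notp)$ must be read as an equality of payoff vectors (all $N$ player payoffs agree), since that is what is needed to make $\phi$ a strong automorphism rather than merely a permutation preserving one player's payoff slice. Once this is noted, the rest is a direct bookkeeping argument using the definition of equivariance, and it is essentially the proof of the preceding theorem on strategically equivalent actions specialised to a transposition within a single player.
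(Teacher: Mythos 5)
Your proposal is correct and follows essentially the same route as the paper: the paper's proof also exhibits the automorphism $\phi=((\dots,\tau_p=(\dots,i\to j,\dots),\dots),\omega=\mathrm{identity})$ and concludes via strategic equivalence and the preceding theorem that $\va^i_p=\va^j_p$. You merely inline the theorem's argument ($f(G)=f(\phi(G))=\phi(f(G))$) and spell out the verification that $\phi$ is a strong automorphism, including the correct reading of the hypothesis as an equality of all players' payoffs.
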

\begin{proof}[Proof]
    If actions are repeated, there there exists an automorphism $\phi=((..., \tau_p=(..., i \to j, ...), ...), \omega=identity)$. Therefore $a_p^i$ and $a_q^j$ are strategically equivalent and have the same embeddings, $\va^i_p = \va^j_p$.
\end{proof}

\begin{proposition}[Player Symmetry]
    If player $p$ and $q$ are symmetric, $f$ is deterministic and equivariant with $f(G) = (\dots, \bA_p, \dots, \bA_q, \dots)$, then $\bA_p$ and $\bA_q$ are identical up to permutation.
\end{proposition}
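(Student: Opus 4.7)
The plan is to reduce this proposition to the theorem on strategically equivalent actions proved just above it. First I would unpack what ``player $p$ and $q$ are symmetric'' means in the language of Definition~\ref{def:strong_automorhpism}: it is exactly the statement that there is a strong automorphism $\phi = ((\tau_1,\dots,\tau_N),\omega)$ of $G$ whose player permutation $\omega$ contains the transposition $(p \leftrightarrow q)$. In particular this automorphism carries with it two action permutations $\tau_p: \gA_p \to \gA_q$ and $\tau_q: \gA_q \to \gA_p$ that witness the symmetry.

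Next I would use $\phi$ to pair up actions of $p$ and $q$. For each index $i \in [T]$, the action $a^i_p$ and the action $a^{\tau_p(i)}_q$ lie in a common orbit of $\phi$ (since applying $\phi$ sends action $i$ of player $p$ to action $\tau_p(i)$ of player $\omega(p) = q$). By Definition~\ref{def:strageic_equivalence}, this means $a^i_p$ and $a^{\tau_p(i)}_q$ are strategically equivalent. The theorem proved immediately before the proposition then gives $\va^i_p = \va^{\tau_p(i)}_q$ directly from determinism and equivariance of $f$, via $f(G) = f(\phi(G)) = \phi(f(G))$.

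Finally, I would assemble these pointwise equalities into the claimed statement about $\bA_p$ and $\bA_q$. Since $\tau_p$ is a bijection $[T] \to [T]$ (viewed as an action relabeling), the collection of equalities $\{\va^i_p = \va^{\tau_p(i)}_q\}_{i=1}^T$ says precisely that $\bA_q$ is the image of $\bA_p$ under the permutation $\tau_p$, i.e. $\bA_p$ and $\bA_q$ agree up to reordering of their $T$ entries. This is the desired conclusion.

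The only real subtlety, and what I would flag as the main place to be careful, is the formalisation step: ensuring that the informal notion ``players $p$ and $q$ are symmetric'' is indeed equivalent to the existence of a strong automorphism whose $\omega$ swaps $p$ with $q$ (and that the accompanying $\tau_p, \tau_q$ are well-defined action bijections between $\gA_p$ and $\gA_q$, which implicitly requires $|\gA_p| = |\gA_q|$). Once that definitional bridge is in place, the rest is a one-line application of the already-established theorem on strategically equivalent actions, with no new calculation required.
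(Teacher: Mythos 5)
Your proposal is correct and follows essentially the same route as the paper's proof: invoke the strong automorphism $\phi$ with $\omega$ sending $p \to q$ that witnesses the player symmetry, conclude that $a^i_p$ and its image under $\tau_p$ in player $q$'s action set are strategically equivalent for every $i$, apply the theorem on strategically equivalent actions to equate their embeddings, and read off $\tau_p(\bA_p) = \bA_q$, i.e.\ equality up to permutation. Your extra remark about formalising ``symmetric players'' as the existence of such an automorphism (with $|\gA_p| = |\gA_q|$) is the same definitional bridge the paper uses implicitly.
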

\begin{proof}[Proof]
    If the game is symmetric between $p$ and $q$, there there exists an automorphism $\phi=((...,\tau_p,...), \omega=(...,p \to q, ...))$. Therefore $\tau_p(a_p^i)$ and $a_q^i$ are strategically equivalent for all $i$, and have the same embeddings, $\tau_p(\bA_p) = \bA_q$.
\end{proof}

\section{Experimental Setup}

\subsection{Supervised Learning in Synthetic Games}
\label{app:l2_ball_games}

\begin{figure*}
    \centering
    \includegraphics[width=\textwidth]{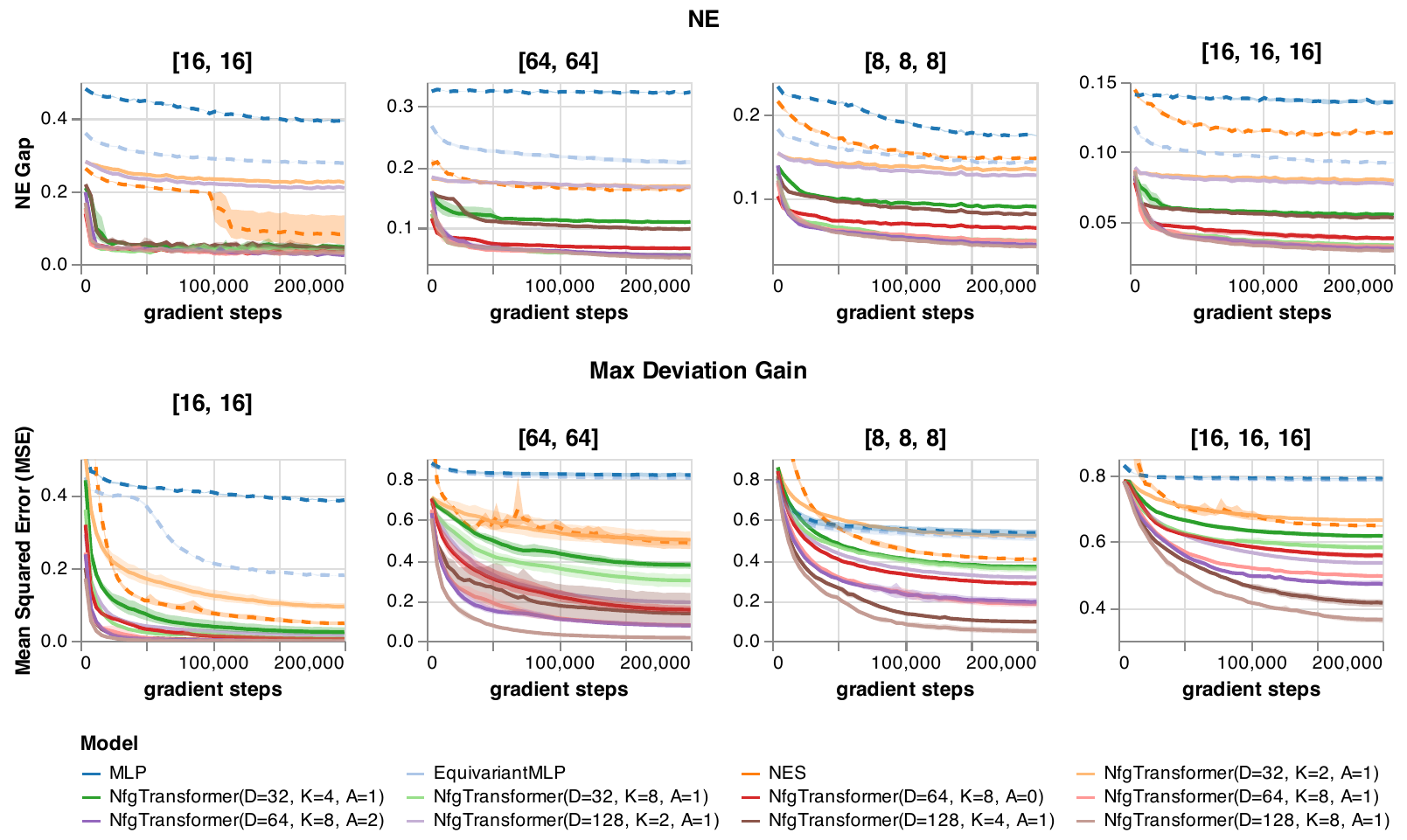}
    \caption{We compare NfgTransformer to baseline architectures in synthetic games. Results from baseline experiments are shown in dashed lines. Each configuration is averaged across 5 independent runs with shaded areas representing the confidence intervals. For NfgTransformer variants, we annotate each variant with corresponding hyper-parameters ($K$, $A$ and $D$ as shown in Figure~\ref{fig:nfg_transformer}).}
    \label{fig:synthetic_games}
\end{figure*}

\paragraph{Games Sampling}
Games are sampled from the equilibrium-invariant subspace \citep{marris2023equilibrium,marris2022turbocharging}, with zero-mean payoff over other player actions and a unit variance ($\sqrt{T^N}$) Frobenius tensor norm over player payoffs. To sample uniformly over such a set, first sample a game from a unit normal distribution, $G_p \sim \mathcal{N}(0, 1)$, and then normalize.
\begin{align}
    G^\text{equil}_p(a) = \frac{\sqrt{T^N}}{Z} \left(G_p(a) - \frac{1}{T} \sum_{a_p} G_p(a_p, a_{-p}) \right) %
    \quad %
    Z = \left\| G_p - \frac{1}{T} \sum_{a_p} G_p(a_p, a_{-p}) \right\|_F %
\end{align}
The benefit of this distribution is that it provides a way to uniformly sample over the space of all possible strategic interactions in a NFG of a specific shape. The equilibrium-invariant subspace has lower degree of freedom than a full NFG, freeing the neural network from having to learn offset and scale invariance. Any game can be simply mapped to the equilibrium-invariant subspace without changing its set of equilibria.

\paragraph{Architecture}

We provide additional technical details on the network architectures presented in Section~\ref{sec:l2ball}. The baseline MLP networks are composed of 5 fully-connected layers with 1,024 hidden units each. The baseline NES architecture\citep{marris2022turbocharging} consisted of 4 ``payoff to payoff'' layers with 128 channels, a ``payoff to dual'' layer with 256 channels and 4 ``dual to dual'' layers with 256 channels. Each layer uses mean and max pooling functions. All NfgTransformer model variants have $H=8$ attention heads. Parameter counts of all model variants are reported in Table~\ref{tab:params_count}. 

\begin{table}[ht]
\caption{The number of network parameters by configuration for each task. We note that the number of parameters of the NfgTransformer and the NES is independent from the size of the games. This is in contrast to fully-connected networks whose parameter counts depend on the input sizes.\label{tab:params_count}}
\centering
\begin{small}
\begin{tabular}{l|rr}
Model                                 & \shortstack[r]{\# Parameter \\ (NE)} & \shortstack[r]{\# Parameter \\ (Max-Deviation-Gain)} \\ \hline
{\tt NfgTransformer(D=~32,K=2,A=1)}  & 0.15M             & 0.16M                    \\ %
{\tt NfgTransformer(D=~32,K=4,A=1)}  & 0.31M             & 0.31M                    \\ %
{\tt NfgTransformer(D=~32,K=8,A=1)}  & 0.61M             & 0.62M                    \\ %
{\tt NfgTransformer(D=~64,K=8,A=0)}  & 1.10M              & 1.11M                    \\ %
{\tt NfgTransformer(D=128,K=2,A=1)} & 1.22M             & 1.29M                    \\ %
{\tt NfgTransformer(D=~64,K=8,A=1)}  & 1.63M             & 1.64M                    \\ %
{\tt NfgTransformer(D=~64,K=8,A=2)}  & 2.16M             & 2.17M                    \\ %
{\tt NfgTransformer(D=128,K=4,A=1)} & 2.44M             & 2.51M                    \\ %
{\tt NfgTransformer(D=128,K=8,A=1)} & 4.88M             & 4.95M                    \\ \hline
{\tt EquivariantMLP}                  & 4.76M - 16.83M    & 4.99M - 20.98M           \\ %
{\tt MLP}                             & 4.76M - 16.83M    & 4.99M - 20.98M           \\ %
{\tt NES}                             & 2.25M   & 2.51M          \\ %
\end{tabular}
\end{small}
\end{table}

\paragraph{Convergence progression}

Figure~\ref{fig:synthetic_games} visualises the training progression of each model configuration, task and game size from the same experiments reported in Table~\ref{tab:synthetic_games}.

\subsection{Payoff Prediction in {\sc Disc} Games}
\label{app:disc_ranking}

\paragraph{Game Sampling} Following Definition~\ref{def:disc}, generating {\sc Disc} games amounts to sampling latent vectors $\vu_t, \vv_t \in \sR^{Z}, t \in [T]$. Any real-valued latent vectors would define a valid {\sc Disc} game and we let $\vu_t = \vn + u$ with $\vn \sim \mathcal{N}(\mathbf{0}, \mathbf{1})$ and $u \sim \mathcal{U}(-1, 1)$. We sample $\vv_t$ in the same way. The shift random variable $u$ is not strictly necessary in this case, but it increases the probability that the resulting {\sc Disc} game is {\em not} fully cyclic following Proposition 1 of \citet{bertrand2023limitations}.

\paragraph{Masking} For each sampled instance of the DISC game, with a payoff tensor of shape $[N, T, \dots, T]$, we additionally sample a binary mask of shape $[T, \dots, T]$ where each element follows $\text{Bernoulli}(p)$. Both the game payoff tensor and the sampled mask for the game tensor are provided as inputs to the NfgTransformer network. We ensure that the model does not observe the payoff values of masked joint-actions following Equation~\ref{eq:attention_mask}. During loss computation, we minimise the $L_2$ loss (Figure~\ref{fig:nfg_transformer_decoder} (Right)) over all joint-actions, observed (i.e. for reconstruction) or unobserved (i.e. for prediction).

\paragraph{Architecture} For all results in this section, we used {\tt NfgTransformer(K=8,A=1,D=64)} with $H = 8$ attention heads for all attention operations.

\paragraph{Baseline Solvers} For all baseline results, we used the open-source implementation of Elo, mElo and xElo of released at \url{https://github.com/QB3/discrating}. For mElo and xElo, we used {\tt n\_components = 3} and the same settings as reported in \citet{bertrand2023limitations}.

\section{Interpretability Results}
\label{app:gamut}

We provide additional details on the empirical results in Section~\ref{sec:interpretation}. Figure~\ref{fig:gamut_ne} shows that despite simplifications made in Section~\ref{sec:interpretation} for our interpretability results, the NfgTransformer remains capable of equilibrium-solving in most games to reasonable accuracy, with {\tt CovariantGame (d06)} the most challenging game class. We show a failure case in this game class in Figure~\ref{fig:additional_interp} (Middle) and present additional example instances where the model successfully solved for a mixed-strategy NE (Top) or generalised to the out-of-distribution game class of Blotto \citep{roberson2006colonel}. Please refer to figure caption for additional remarks on the results.

\begin{figure}
    \centering
    \includegraphics[width=\textwidth]{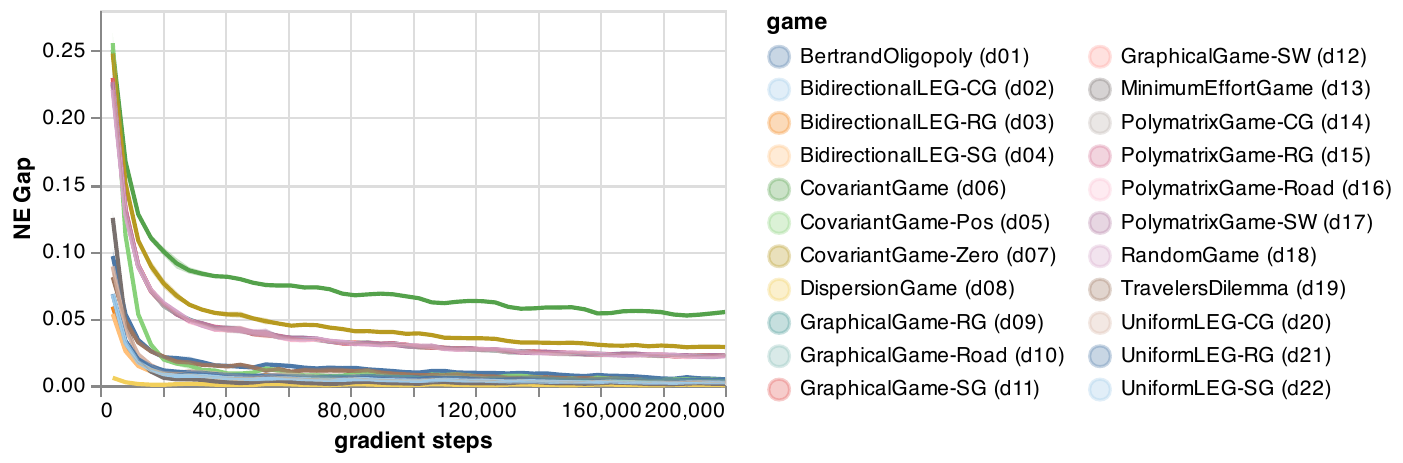}
    \caption{{\sc NE Gap} reported for each of the 22 GAMUT games throughout training. We note that a single network, {\tt NfgTransformer(K=8,A=0,D=16)}, with $H = 1$ is optimised to solve for Nash Equilibrium across all game classes.}
    \label{fig:gamut_ne}
\end{figure}

\begin{figure}
    \centering
    \includegraphics[width=\textwidth]{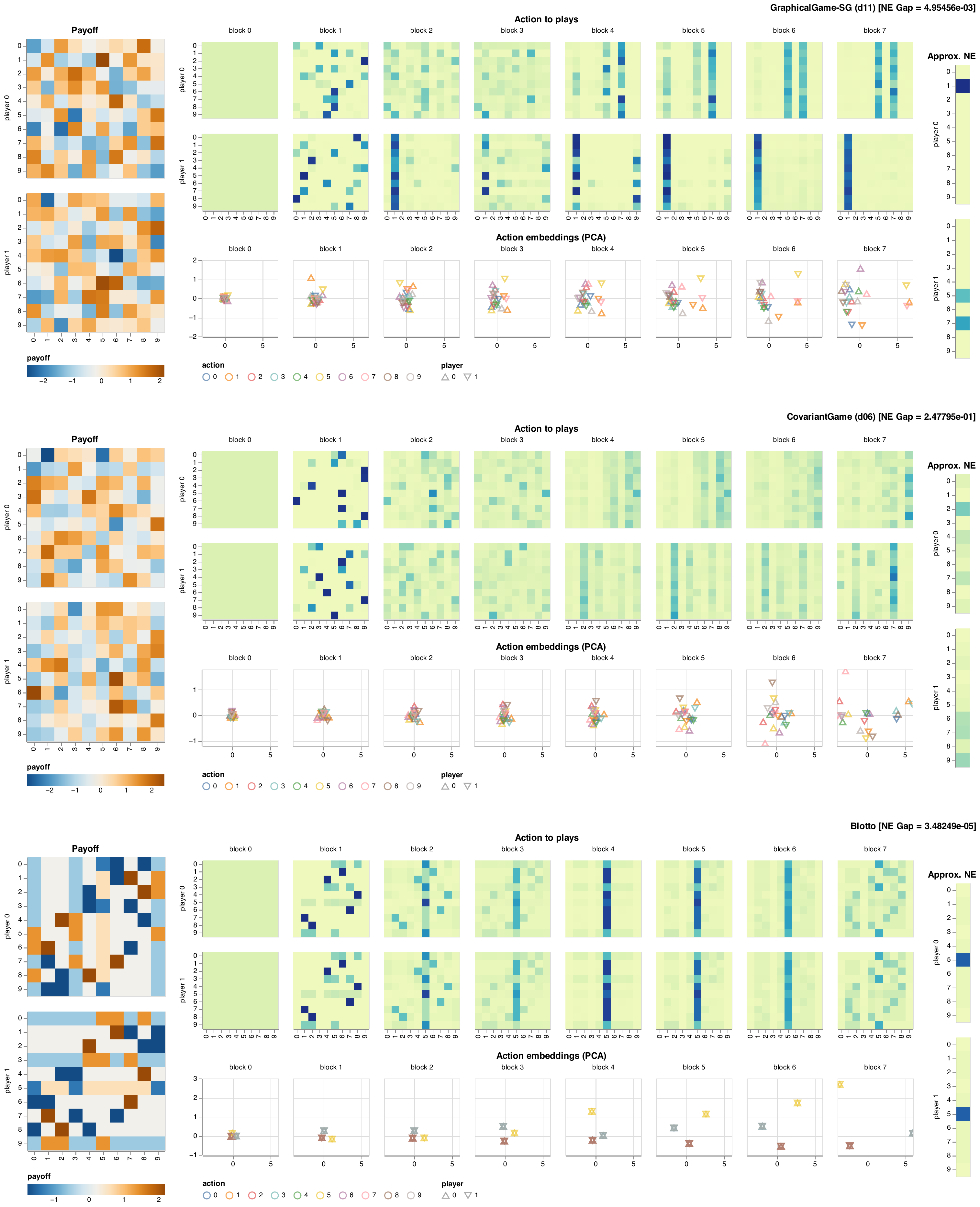}
    \caption{Here we provide additional interpretability results similar to Figure~\ref{fig:gamut} but for games that are asymmetric, have mixed-strategy NE (Top) or out-of-distribution (Bottom). We additionally provide an instance where the model struggled to find an NE (Middle) where the attention masks did not appear to have converged. For Blotto \citep{roberson2006colonel} which is a game class not seen during training, the model generalised well and identified a pure-strategy NE. The action embeddings also revealed three clusters, corresponding to the three strategically equivalent classes of actions. Note that one of the clusters corresponds to the dominant action of the two players.}
    \label{fig:additional_interp}
\end{figure}

\section{The Space of 2×2 Games}
\label{app:two_by_two}

\cite{marris2023equilibrium} introduced a subset of 2×2 normal-form games that any 2×2 game can be mapped to without changing its set of (coarse) correlated equilibria and Nash equilibria. This subset of games is called the equilibrium-invariant subset, and includes all possible nontrivial strategic interactions of 2×2 games.
Properties of games such as their equilibria, permutation symmetries, and best-response dynamics can be visualized in this ``map of games''. We can analyse the embeddings found by the NfgTransformer by sweeping over the nontrivial 2×2 equilibrium-invariant subset.

We used the transformer architecture {\tt NfgTransformer(K=2,A=1,D=16)} with $H=2$ attention heads at every self- and cross-attention layer. We used an additional linear layer to reduce the action embedding dimension down to 1, per player, per action, resulting in four variables to describe the game embeddings.
We trained NfgTransformer with an NE objective over the space of equilibrium-invariant subsets, and verified that the loss approaches zero. With the trained NfgTransformer, we sweep over the nontrivial 2×2 equilibrium-invariant subset, and visualize the embeddings (Figure~\ref{fig:embeddings_in_2x2_space}).

\begin{figure}
    \centering
    \begin{subfigure}[t]{0.49\linewidth}
        \embedding[image=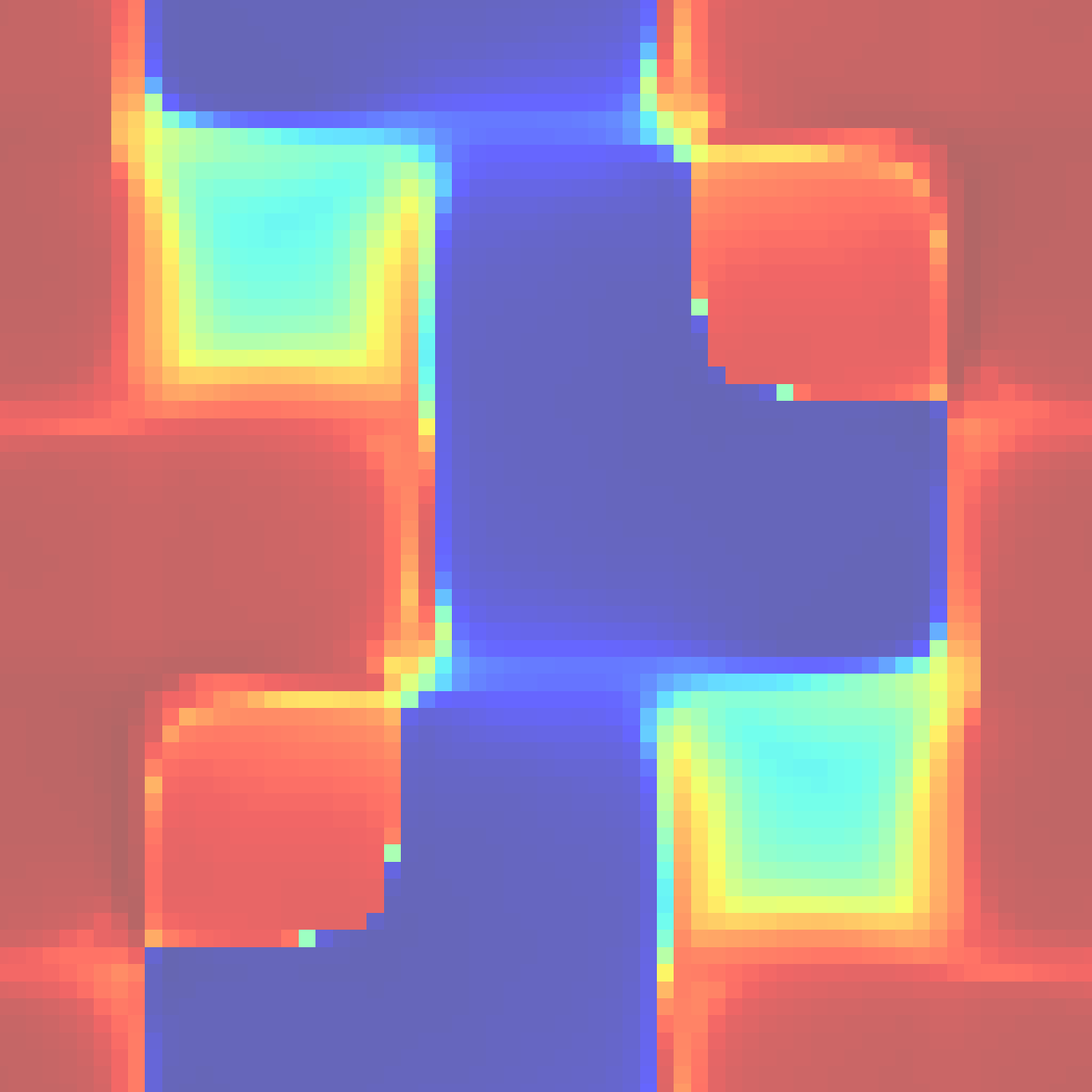,no axes labels,no tick labels,no best-response names]
        \caption{$\va^1_1$}
    \end{subfigure}\hfill
    \begin{subfigure}[t]{0.49\linewidth}
        \embedding[image=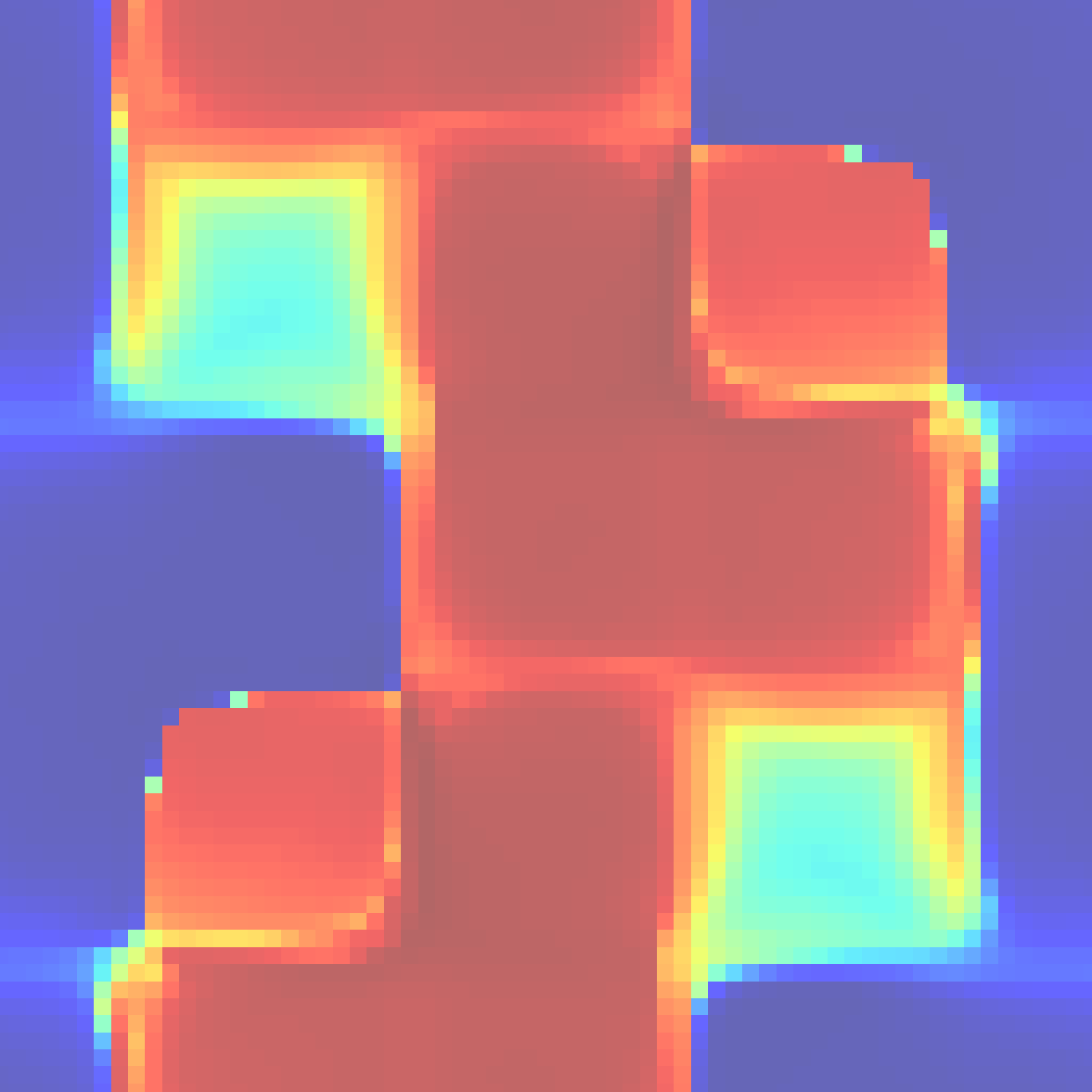,no axes labels,no tick labels,no best-response names]
        \caption{$\va^2_1$}
    \end{subfigure}

    \begin{subfigure}[t]{0.49\linewidth}
        \embedding[image=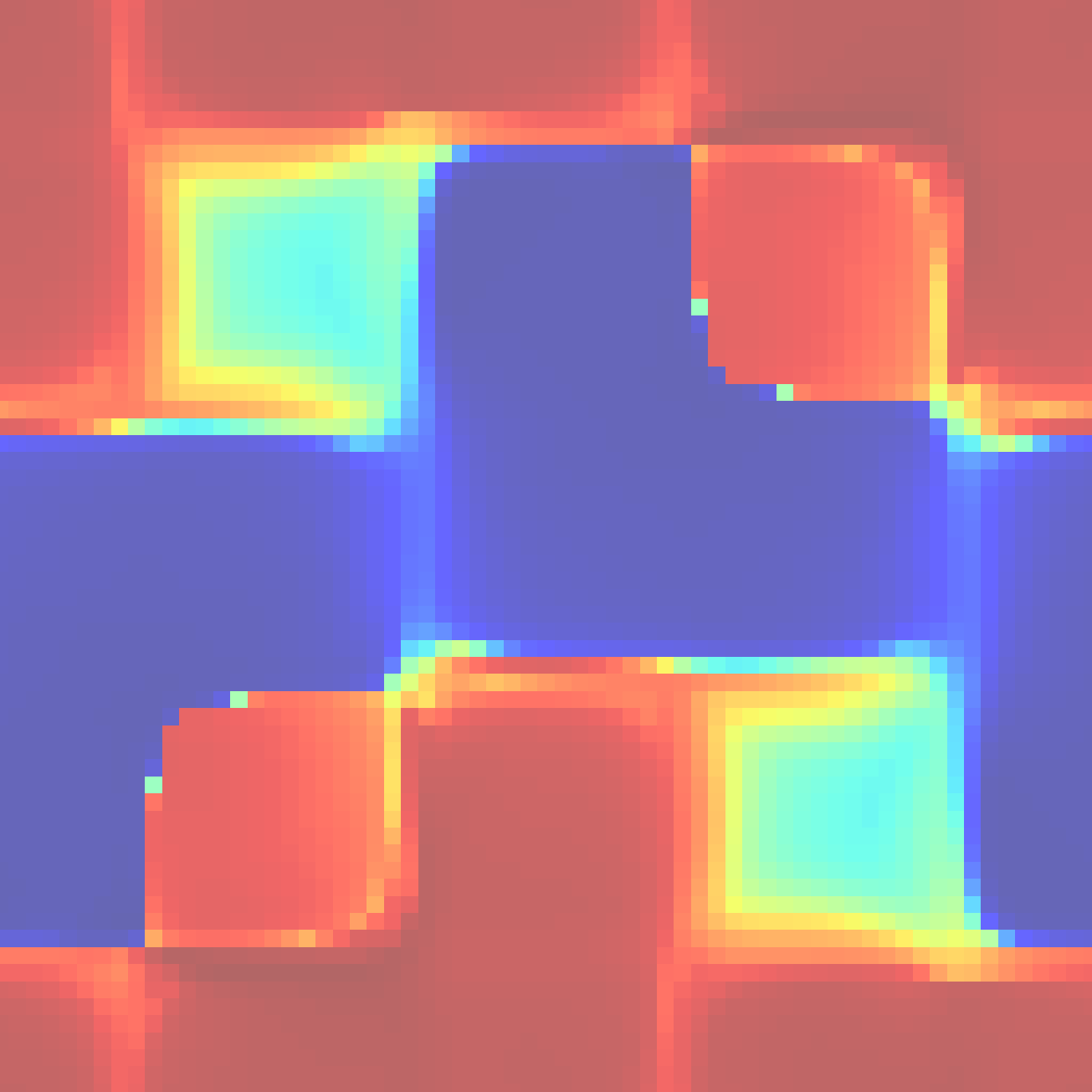,no axes labels,no tick labels,no best-response names]
        \caption{$\va^1_2$}
    \end{subfigure}\hfill
    \begin{subfigure}[t]{0.49\linewidth}
        \embedding[image=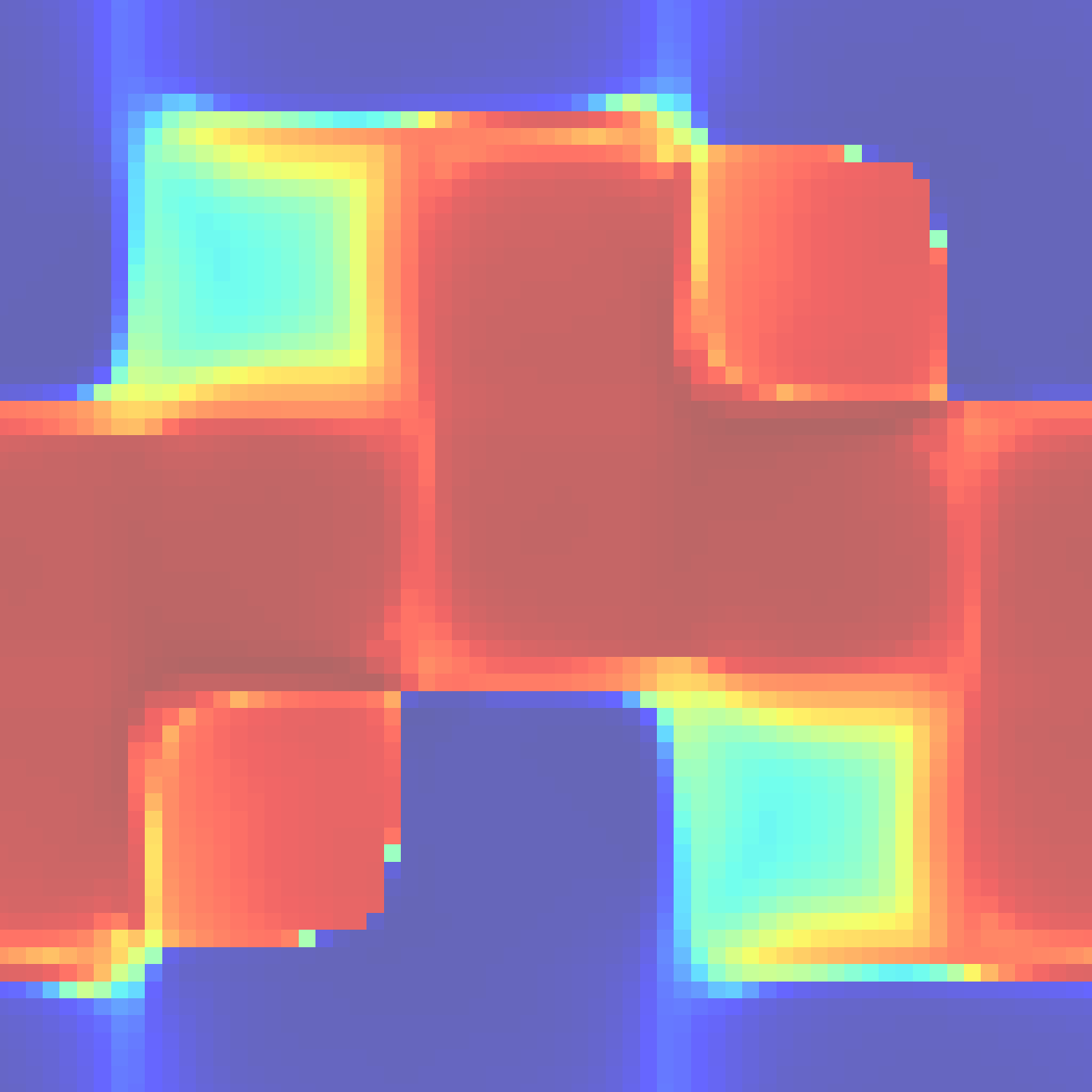,no axes labels,no tick labels,no best-response names]
        \caption{$\va^2_2$}
    \end{subfigure}

    \caption{NfgTransformer action-embeddings over the set of nontrivial 2×2 equilibrium-invariant normal-form games, when trained with an NE objective. The embeddings found closely follow the equilibrium boundaries (dark lines). Symmetries over the space of games are respected. Symmetric games (bottom-left to top-right diagonal) have the same embeddings between players. Permutations over players (folding over the bottom-left to top-right diagonal) are consistent. Colorbar: [$-11.58$ \raisebox{-0.18em}{\includegraphics[height=1em]{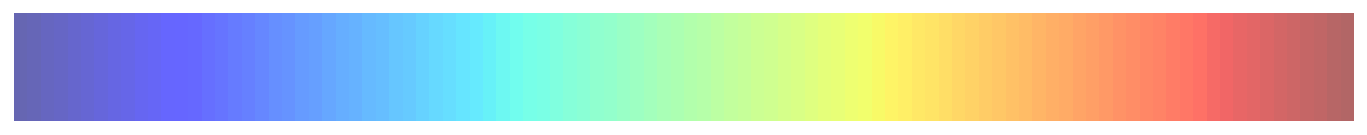}} $+11.56$].}
    \label{fig:embeddings_in_2x2_space}
\end{figure}

The learned action-embeddings have a very low value (\textcolor{blue}{blue} regions) when that action has all the mass in the NE, and very high value when the action has no mass in the NE (\textcolor{darkred}{dark red} regions). These ``L'' regions therefore correspond to games which have a single pure NE. The embeddings are low value (\textcolor{cyan}{cyan} regions) when the game has a mixed NE solution and occurs near the cyclic games (\brgraph{1}{0}{0}{1}{0}{1}{1}{0} and \brgraph{0}{1}{1}{0}{1}{0}{0}{1}). In these regions, all embeddings have to be similarly colored, as all actions are mixed. The embeddings are high value (\textcolor{red}{red} regions) in coordination game areas where there are two disconnected NEs (\brgraph{1}{0}{0}{1}{1}{0}{0}{1} and \brgraph{0}{1}{1}{0}{0}{1}{1}{0}). The borders between these regions correspond to changes in game payoffs when one action becomes become profitable than another, and as a result the NE can change drastically, and therefore so does the embedding.

When a game is symmetric, $G_1(a_1,a_2) = G_2(a_2,a_1)$, the embeddings between players are equal. We can verify this by studying the bottom-left to top-right diagonal. When swapping the player orders, we expect the embeddings to be swapped. Swapping players is equivalent to folding over the same diagonal. Again, we can visually verify that the embeddings are swapped. 

Next, we turn to the question of when the embeddings uniquely describe a game. We define a distance metric between action embeddings for game $i$ and game $j$, $d(\bA^i,\bA^j) = (\sum_{p \in [1, 2]} \sum_{a_p \in [1, 2]} ( \bA^i_p(a_p) - \bA^j_p(a_p) )^2)^\frac{1}{2}$, where $i,j \in \mathcal{G}$ are games sampled from a grid, which describes how close the embeddings of two games are to each other. We can also define the distance to the nearest other game within the set of considered games, $d_\text{min}(\bA^i, \mathcal{G}) = \min_{j \neq i \in \mathcal{G}} d(\bA^i,\bA^j)$. Using these distance metrics we can verify that $d_\text{min}(\bA^i,\bA^j) > 0$ apart from games with strategically equivalent actions (Figure~\ref{fig:embedding_dists}).

\begin{figure}
    \noindent\begin{minipage}{0.49\linewidth}
        \begin{subfigure}[t]{\linewidth}
            \embedding[image=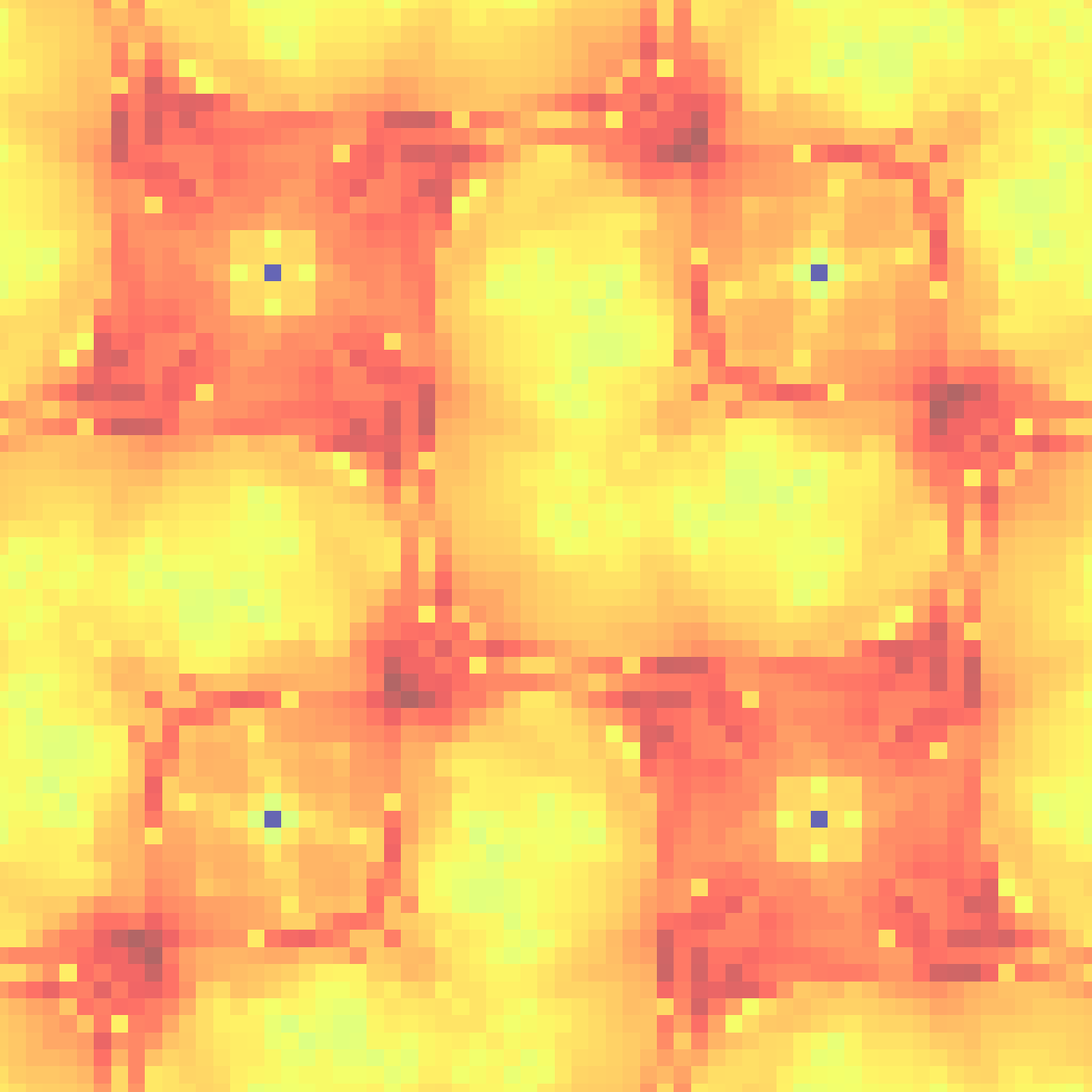,no axes labels,no tick labels,no best-response names]
            \caption{$\ln(d_\text{min}(\bA^i, \mathcal{G}))$}
            \label{fig:embedding_dist}
        \end{subfigure}
    \end{minipage}\hfill
    \begin{minipage}{0.49\linewidth}
        \begin{subfigure}[t]{0.49\linewidth}
            \centering
            \brpayoffstable[%
                row player first strategy label=$a_1^1$,%
                row player second strategy label=$a_1^2$,%
                column player first strategy label=$a_2^1$,%
                column player second strategy label=$a_2^2$,%
            ]{1}{0}{0}{1}{1}{0}{0}{1}
            \caption{Coord}
        \end{subfigure}
        \begin{subfigure}[t]{0.49\linewidth}
            \centering
            \brpayoffstable[%
                row player first strategy label=$a_1^1$,%
                row player second strategy label=$a_1^2$,%
                column player first strategy label=$a_2^1$,%
                column player second strategy label=$a_2^2$,%
            ]{1}{0}{0}{1}{0}{1}{1}{0}
            \caption{Cycle}
        \end{subfigure}
        
        \begin{subfigure}[t]{0.49\linewidth}
            \centering
            \brpayoffstable[%
                row player first strategy label=$a_1^1$,%
                row player second strategy label=$a_1^2$,%
                column player first strategy label=$a_2^1$,%
                column player second strategy label=$a_2^2$,%
            ]{0}{1}{1}{0}{0}{1}{1}{0}
            \caption{Anti Coord}
        \end{subfigure}
        \begin{subfigure}[t]{0.49\linewidth}
            \centering
            \brpayoffstable[%
                row player first strategy label=$a_1^1$,%
                row player second strategy label=$a_1^2$,%
                column player first strategy label=$a_2^1$,%
                column player second strategy label=$a_2^2$,%
            ]{0}{1}{1}{0}{1}{0}{0}{1}
            \caption{Anti Cycle}
        \end{subfigure}
        
        \vspace{1cm}
        
        \begin{subfigure}[t]{0.49\linewidth}
            \centering
            \begin{tabular}{c|cc}
                $\bA$ & $a_1$ & $a_2$ \\\hline
                $a^1$ & 9.415 & 9.415 \\
                $a^2$ & 9.415 & 9.415 \\
            \end{tabular}
            \caption{Coord Embedding}
        \end{subfigure}
        \begin{subfigure}[t]{0.49\linewidth}
            \centering
            \begin{tabular}{c|cc}
                $\bA$ & $a_1$ & $a_2$ \\\hline
                $a^1$ & -2.497 & -2.497 \\
                $a^2$ & -2.497 & -2.497 \\
            \end{tabular}
            \caption{Cycle Embedding}
        \end{subfigure}
    \end{minipage}

    \centering
    \caption{Subfigure \ref{fig:embedding_dist} shows the distance to the nearest other game embedding. The embeddings produced by NfgTransformer uniquely describe the 2×2 game apart from two edge cases. Two Coordination games (\brgraph{1}{0}{0}{1}{1}{0}{0}{1} and \brgraph{0}{1}{1}{0}{0}{1}{1}{0}) have identical embeddings, and two Cycle games (\brgraph{1}{0}{0}{1}{0}{1}{1}{0} and \brgraph{0}{1}{1}{0}{1}{0}{0}{1}) have identical embeddings, each because there are strategically equivalent. Colorbar: [$0.0$ \raisebox{-0.18em}{\includegraphics[height=1em]{figures/colormap.png}} $4.675$].}
    \label{fig:embedding_dists}
\end{figure}

The \brgraph{1}{0}{0}{1}{1}{0}{0}{1}~Coordination game has identical action embeddings to the \brgraph{0}{1}{1}{0}{0}{1}{1}{0}~Anti-Coordination game. In this case, due to permutation equivariance, the embedding for each action, in natural language, is: ``there is an action that the opponent can play which will give us both identical high payoff, and there is an action that the opponent can play which will give us both identical low payoff''. Due to the equivariant property it is not possible to disambiguate between these games from the embeddings alone. By initializing the network with action labels, hinting a reconstruction method with a row of true payoffs, or permuting the payoffs by a tiny amount, would all enable disambiguation. The last strategy can be seen from the figure where slightly biased coordination games all have positive distance to their nearest other game embedding. Similarly, the \brgraph{1}{0}{0}{1}{0}{1}{1}{0}~Cycle game (also known as matching pennies) has identical embeddings to the \brgraph{0}{1}{1}{0}{1}{0}{0}{1}~Anticlockwise Cycle game. In this case, the embedding is ``there is an action that the opponent can play which will me a high positive payoff and the opponent a high negative payoff, and there is an action that the opponent can play which will give me a high negative payoff and the opponent a high positive payoff''. Note that the Coordination and Cycle game have distinct embeddings. These are the only 4 points in the space that can only by disambiguated up to handedness. These appear with measure zero in the equilibrium-invariant subspace.

Overall, the embeddings neatly describe and predict the known structure of 2×2 games. The theoretically predicted properties, including permutation symmetries and NE, are reproduced in this experiment.

\end{document}